\documentclass{article}
\usepackage{tikz}
\usetikzlibrary{calc,arrows,positioning}
\usepackage[all]{xy}
\usepackage{hyperref}

\input{ma-remarks.sty}

\title{Remarks on Matsumoto and Amano's normal form\\
  for single-qubit Clifford+$T$ operators}

\author{Brett Giles and Peter Selinger}

\date{}

\begin{document}
\maketitle

\begin{abstract}
  Matsumoto and Amano (2008) showed that every single-qubit
  Clifford+$T$ operator can be uniquely written of a particular form,
  which we call the {\em Matsumoto-Amano normal form}. In this mostly
  expository paper, we give a detailed and streamlined presentation of
  Matsumoto and Amano's results, simplifying some proofs along the
  way. We also point out some corollaries to Matsumoto and Amano's
  work, including an intrinsic characterization of the Clifford+$T$
  subgroup of $SO(3)$, which also yields an efficient $T$-optimal
  exact single-qubit synthesis algorithm. Interestingly, this also
  gives an alternative proof of Kliuchnikov, Maslov, and Mosca's
  exact synthesis result for the Clifford+$T$ subgroup of $U(2)$. 
\end{abstract}

\section{Introduction}

An important problem in quantum information theory is the
decomposition of arbitrary unitary operators into gates from some
fixed universal set {\cite{Nielsen-Chuang}}. Depending on the operator
to be decomposed, this may either be done exactly or to within some
given accuracy $\epsilon$; the former problem is known as {\em exact
  synthesis} and the latter as {\em approximate synthesis}
{\cite{Kliuchnikov-etal}}. Here, we focus on the problem of exact
synthesis for single-qubit operators, using the Clifford+$T$ universal
gate set. Recall that the 192-element Clifford group for one qubit is
generated by the Hadamard gate $H$, the phase gate $S$, and the scalar
$\omega=e^{i\pi/4}$. It is well-known that one obtains a universal
gate set by adding the non-Clifford operator $T$ {\cite{Nielsen-Chuang}}.
\begin{equation}\label{eqn-generators}
  H = \frac{1}{\sqrt{2}}\zmatrix{cc}{1&1\\1&-1}, \quad
  S = \zmatrix{cc}{1&0\\0&i}, \quad
  T = \zmatrix{cc}{1&0\\0&e^{i\pi/4}}, \quad
  \omega=e^{i\pi/4}.
\end{equation}
Matsumoto and Amano {\cite{MA08}} showed that every single-qubit
Clifford+$T$ operator can be uniquely written as a circuit in the following form,
which we call the {\em Matsumoto-Amano normal form:}
\begin{equation}\label{eqn-ma}
 (T\mid\emptyseq)\,(HT\mid SHT)^*\,{\cC}.
\end{equation}
Here, we have used the notation of {\em regular expressions} to denote
a set of sequences of operators; see {\cite{regexp}} for details on
regular expressions. The symbol $\emptyseq$ denotes the empty sequence
of operators, and we used the symbol $\cC$ to denote an arbitrary Clifford
operator.  In words, a Matsumoto-Amano normal form consists of a
rightmost Clifford operator, followed by any number of {\em syllables}
of the form $HT$ or $SHT$, followed by an optional syllable $T$. The most
important properties of the Matsumoto-Amano normal form are:
\begin{itemize}
\item Existence: every single-qubit Clifford+$T$ operator can be
  written in Matsumoto-Amano normal form. Moreover, there is an
  efficient algorithm for converting any operator to normal form.

\item Uniqueness: no operator can be written in Matsumoto-Amano normal
  form in more than one way.

\item $T$-optimality: of all the possible exact decompositions of a
  given operator into the Clifford+$T$ set of gates, the
  Matsumoto-Amano normal form contains the smallest possible number of
  $T$-gates.
\end{itemize}
Despite its enormous usefulness, the Matsumoto-Amano normal form is
still not as widely known as it should be. Matsumoto and Amano's paper
contains a wealth of information that is not readily accessible,
because it is left implicit or only mentioned in proofs, rather than
stated as separate theorems. For example, Matsumoto and Amano's
uniqueness proof implicitly contains an efficient algorithm for
$T$-optimal exact single-qubit synthesis. The concept of {\em
  denominator exponent} is left implicit in the proof of
Theorem~1(II-B), as is the concept of {\em residue}, which appears as
evenness and oddness in properties T1--T9. The normal form's
$T$-optimality is not explicitly stated, although it is an obvious
consequence of the normalization procedure, and is implicitly used in
Section~5.  Also, the correspondence between $T$-count and
denominator exponents is hinted at in equation (15), but not
elaborated upon. Other researchers have later refined these
techniques, for example {\cite{Kliuchnikov-etal}}, {\cite{BS}}, and
more recently {\cite[Section 4]{Gosset-etal}}.

The purpose of this note, which is mostly expository, is to give a
detailed and streamlined presentation of Matsumoto and Amano's
results. In particular, we give a greatly simplified version of the
original uniqueness proof. We explicitly state some facts and
corollaries that were left implicit in Matsumoto and Amano's work.  We
give an intrinsic characterization of the Clifford+$T$ subgroup of
$SO(3)$, which is similar to (and indeed implies) the characterization
of the Clifford+$T$ subgroup of $U(2)$ that was given in
{\cite{Kliuchnikov-etal}}. We also show how to calculate the $T$-count
of an operator of $U(2)$ or $SO(3)$ as a function of its denominator
exponent and residue. Finally, we discuss some alternate normal forms.

\section{Existence}

A single-qubit quantum circuit is just a sequence of operators,
usually taken from some distinguished gate set. In the following, we
often write $A_1A_2\ldots A_n$ for such a circuit consisting of $n$
gates, and it is understood that the gates are applied from right to
left, i.e., as in the notation for matrix multiplication. By slight
abuse of notation, we also use the notation $A_1A_2\ldots A_n$ for the
corresponding operator, i.e., the actual matrix multiplication. It
will always be clear from the context whether we are speaking of a
circuit or its corresponding operator.

\begin{definition}\label{def-sC}
  Let $\sC$ denote the Clifford group on one qubit, generated by $H$,
  $S$, and $\omega$. This group has 192 elements. Let $\sS$ be the
  64-element subgroup generated by $S$, $\omega$, and the Pauli
  operator $X$. Let $\sCp = \sC\sm\sS$. Let
  $\sH = \s{I,H,SH}$ and $\sHp = \s{H,SH}$.
\end{definition}

\begin{lemma}\label{lem-SH}
  The following hold:
  \begin{align}
    \sC ~&=~ \sH\sS,\label{eqn-CHS}\\
    \sCp ~&=~ \sHp\sS,\label{eqn-CSHS}\\
    \sS\sHp ~&\seq~ \sHp\sS,\label{eqn-SHHS}\\
    \sS T ~&=~ T \sS,\label{eqn-STTS}\\
    T \sS T ~&=~ \sS.\label{eqn-TSTS}
  \end{align}
\end{lemma}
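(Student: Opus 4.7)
I would establish the five identities in the stated order, using the structure of the cosets of $\sS$ in $\sC$ together with a few direct matrix computations.

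For \eqref{eqn-CHS} and \eqref{eqn-CSHS}, I would look at how each Clifford acts on the Pauli operator $Z$ by conjugation. Elements of $\sS = \langle S, \omega, X\rangle$ send $Z$ to $\pm Z$, since $S$ and $\omega$ commute with $Z$ and $X$ anticommutes with it. On the other hand, $HZH\inv = X$ and $(SH)Z(SH)\inv = SXS\inv = Y$, so the three elements $I, H, SH$ of $\sH$ lie in three distinct cosets of $\sS$ in $\sC$. These cosets therefore contain $3 \cdot 64 = 192$ distinct elements, which exhausts $\sC$; this gives \eqref{eqn-CHS}. Restricting to the two non-identity cosets yields $\sCp = \sHp\sS$, which is \eqref{eqn-CSHS}.

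For \eqref{eqn-SHHS}, I would argue that every product $AB$ with $A\in\sS$ and $B\in\sHp$ lies in $\sCp$: if $AB$ were in $\sS$, then $B = A\inv(AB)$ would be as well, contradicting the fact that $H$ and $SH$ lie in non-identity cosets. Hence $\sS\sHp \seq \sCp = \sHp\sS$ by \eqref{eqn-CSHS}.

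For \eqref{eqn-STTS}, rewritten as $T\sS T\inv = \sS$, it suffices to verify that $T$ conjugates each generator of $\sS$ into $\sS$. Since $T$ is diagonal and $\omega$ is a scalar, $TST\inv = S$ and $T\omega T\inv = \omega$ are immediate. The only substantive step is a one-line matrix computation showing $TXT\inv \in \sS$ (explicitly, $TXT\inv = \omega\inv SX$). Finally, \eqref{eqn-TSTS} drops out from \eqref{eqn-STTS} by right-multiplying by $T$: using $T^2 = S \in \sS$,
\[
T\sS T ~=~ (T\sS T\inv)\,T^2 ~=~ \sS\cdot S ~=~ \sS.
\]
The only real obstacle is the matrix check $TXT\inv \in \sS$; everything else is coset counting together with the identity $T^2 = S$.
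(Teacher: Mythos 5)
Your proof is correct and follows essentially the same route as the paper: the first three identities via the three left cosets of the 64-element subgroup $\sS$ in the 192-element group $\sC$, and the last two via the generator relations $TST\inv=S$, $T\omega T\inv=\omega$, $TXT\inv\in\sS$, and $T^2=S$. The only difference is that you explicitly verify the distinctness of the cosets $\sS$, $H\sS$, $SH\sS$ by tracking the conjugation action on $Z$, a detail the paper leaves implicit.
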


\begin{proof}
  Since $\sS$ is a 64-element subgroup of $\sC$, it has three left
  cosets. They are $\sS$, $H\sS$, and $SH\sS$. Since $\sC$ is the
  disjoint union of these cosets, (\ref{eqn-CHS}) and (\ref{eqn-CSHS})
  immediately follow. For (\ref{eqn-SHHS}), first notice that $\sS
  S=\sS$, and therefore $\sS\sHp = \sS H\cup \sS SH = \sS H$. Since
  $\sS H$ is a non-trivial right coset of $\sS$, it follows that $\sS
  H \seq \sC\sm\sS = \sCp$. Combining these facts with
  (\ref{eqn-CSHS}), we have (\ref{eqn-SHHS}). Finally, the equations
  (\ref{eqn-STTS}) and (\ref{eqn-TSTS}) are trivial consequences of
  the equations $ST=TS$, $XT=TXS\omega\inv$, $\omega T=T\omega$, and $TT=S$.
\end{proof}

\begin{theorem}[Matsumoto and Amano {\cite[Thm 1(I)]{MA08}}]\label{thm-ex}
  Every single-qubit Clifford+$T$ operator can be written in
  Matsumoto-Amano normal form.
\end{theorem}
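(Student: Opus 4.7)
The plan is to induct on the number of $T$-generators in a word
representing the given Clifford$+T$ operator. Writing $U$ as a
product $g_1 \cdots g_n$ with each $g_i \in \s{H, S, \omega, T}$, I
would argue by induction on the number $k$ of indices with $g_i = T$.
The base case $k = 0$ is immediate, since $U$ is then Clifford and
$U = \emptyseq \cdot U$ is already in Matsumoto-Amano normal form
(empty $T$-part, with $U$ itself as the trailing Clifford).

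For the inductive step, split off the rightmost $T$ as
$U = V T \cC'$, where $V$ has $k-1$ copies of $T$ and $\cC'$ is the
Clifford collecting all generators to its right. By the inductive
hypothesis, $V = N\cC$ with $N$ a normal $T$-part and $\cC$ a
Clifford, so the task reduces to rewriting $N\cC T \cC'$ in normal
form. The key move is to push $\cC$ across $T$ using
Lemma~\ref{lem-SH}, with a case split on whether $\cC$ lies in $\sS$
or in $\sCp$. If $\cC \in \sCp$, then (\ref{eqn-CSHS}) lets us write
$\cC = hs$ with $h \in \sHp = \s{H, SH}$ and $s \in \sS$, and
(\ref{eqn-STTS}) then yields $\cC T = hT s'$ for some $s' \in \sS$;
since $hT$ is a legal syllable $HT$ or $SHT$, appending it to $N$
produces a longer but still valid normal $T$-part with $s'\cC'$ as
the new trailing Clifford. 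If $\cC \in \sS$, then (\ref{eqn-STTS})
gives $\cC T = T s'$ directly, so $U = (NT)\, s' \cC'$, and one only
has to check that $NT$ is again a normal $T$-part.

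The main obstacle is this last check, in the subcase $\cC \in \sS$
with $N$ nonempty: $N$ already ends in $T$, so $NT$ contains an
illegal $TT$. The remedy is to use $T^2 = S \in \sS$ to collapse the
collision into the Clifford tail. Concretely, either $N = T$
collapses entirely so that $NT = S$ is absorbed into the new Clifford
$s'\cC'$, or the rightmost syllable $HT$ (respectively $SHT$) merges
with the appended $T$ to form $HS$ (respectively $SHS$), both
Clifford, which can then be pushed into the tail with the syllable
dropped. Once this bookkeeping is handled, the result is in normal
form and the induction closes; the lemma has done all the actual
algebraic work.
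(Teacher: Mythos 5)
Your proof is correct, but it is organized differently from the paper's. The paper writes the whole word as $C_n\,T\,C_{n-1}\cdots C_1\,T\,C_0$ and normalizes it in two global passes: first every interior factor lying in $\sS$ is eliminated using $T\sS T=\sS$ (equation (\ref{eqn-TSTS})), and only then are the coset decompositions (\ref{eqn-CHS})--(\ref{eqn-CSHS}) applied and all the $\sS$-factors pushed to the right in one sweep. You instead peel off one $T$ at a time and maintain a normal form throughout, resolving the $T\sS T$ collision locally via $T^2=S$ at the moment it arises rather than in a preprocessing pass. Both arguments rest entirely on Lemma~\ref{lem-SH}, so the algebraic content is identical; what your organization buys is that it is essentially the incremental procedure that the paper has to extract separately in Corollary~\ref{cor-efficient} to obtain a linear-time algorithm (a naive reading of the paper's first pass costs $O(n^2)$), and it keeps $T$-optimality equally transparent, since each of your rewriting steps either appends one $T$ or deletes syllables. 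The one bookkeeping point you rightly isolate---that when $\cC\in\sS$ the appended $T$ collides with the trailing $T$ of $N$ and must be merged into the Clifford tail via $TT=S$, possibly dropping a syllable---is handled correctly in all subcases ($N$ empty, $N=T$, and $N$ ending in $HT$ or $SHT$).
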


\begin{proof}
  Let $M$ be a single-qubit Clifford+$T$ operator. By definition, $M$ can be
  written as
\begin{equation}\label{eqn-M}
  M = C_n\;T\;C_{n-1}\;\cdots\;C_1\;T\;C_0,
\end{equation}
for some $n\geq 0$, where $C_0,\ldots,C_n\in\sC$. First note that if
$C_i\in\sS$ for any $i\in\s{1,\ldots,n-1}$, then we can immediately
use (\ref{eqn-TSTS}) to replace $TC_iT$ by a single Clifford
operator. This yields a shorter expression of the form (\ref{eqn-M})
for $M$. We may therefore assume without loss of generality that
$C_i\not\in\sS$ for $i=1,\ldots,n-1$. If $n=0$, then $M$ is a Clifford
operator, and there is nothing to show. Otherwise, we have
\begin{align}
  M ~&\in~ \sC\;T\;\sCp\;\cdots\;\sCp\;T\;\sC &&\mbox{ by (\ref{eqn-M})}\\
  &=~ \sH\sS\;T\;\sHp\sS\;\cdots\;\sHp\sS\;T\;\sC &&\mbox{ by
    (\ref{eqn-CHS}) and (\ref{eqn-CSHS})}\\
  &\seq~ \sH\;T\;\sHp\;\cdots\;\sHp\;T\;\sC &&\mbox{ by
    (\ref{eqn-SHHS}) and (\ref{eqn-STTS}).}\label{eqn-HTHpTC}
\end{align}
Note how, in the last step, the relations (\ref{eqn-SHHS}) and
(\ref{eqn-STTS}) were used to move all occurrences of $\sS$ to the
right, where they were absorbed into the final $\sC$.  It is now
trivial to see that every element of (\ref{eqn-HTHpTC}) can be written
in Matsumoto-Amano normal form, finishing the proof.
\end{proof}

\begin{corollary}[Matsumoto and Amano {\cite[p.8]{MA08}}]\label{cor-efficient}
  There exists a linear-time algorithm for symbolically reducing any
  sequence of Clifford+$T$ operators to Matsumoto-Amano normal form.
  More precisely, this algorithm runs in time at most $O(n)$, where
  $n$ is the length of the input sequence.
\end{corollary}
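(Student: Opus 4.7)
The plan is to turn the existence proof of Theorem~\ref{thm-ex} into an incremental algorithm: scan the input sequence once, maintain a running Matsumoto-Amano normal form for the product of the gates read so far, and show that each input symbol is absorbed in constant time.

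First I would represent the current partial normal form as a triple $(t,\sigma,C)$, where $t\in\s{\emptyseq,T}$ records the optional leading $T$-syllable, $\sigma$ is a list of syllables from $\s{HT,SHT}$ stored in a data structure supporting $O(1)$ append and remove-last (for instance a doubly-linked list, or a dynamic array used as a stack), and $C\in\sC$ is the rightmost Clifford. Since $\sC$ has only $192$ elements, I would precompute finite lookup tables for Clifford multiplication, for the rewrite $CT=TC'$ on $C\in\sS$ supplied by (\ref{eqn-STTS}), and for the decomposition $C=hC''$ with $h\in\sHp$ and $C''\in\sS$ on $C\in\sCp$ supplied by (\ref{eqn-CSHS}); with these tables in place, each primitive manipulation runs in $O(1)$.

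To absorb a new input gate $G$, producing the normal form of $(t\sigma C)\cdot G$, I would branch as follows. If $G\in\sC$, update $C\leftarrow CG$ and stop. If $G=T$, split on $C$. When $C\in\sS$, rewrite $CT=TC'$ with $C'\in\sS$: if $\sigma$ has a last syllable $s_m$, pop it from $\sigma$ and replace $C$ by $(s_mT)C'$, which lies in $\sC$ because $TT=S$; if $\sigma$ is empty, toggle $t$, collapsing $TT=S$ into $C$ in the case $t=T$. When $C\in\sCp$, use (\ref{eqn-CSHS}) to factor $C=hC''$ with $h\in\sHp$, $C''\in\sS$, and (\ref{eqn-STTS}) to write $C''T=TC'''$ with $C'''\in\sS$; push the new syllable $hT\in\s{HT,SHT}$ onto $\sigma$ and set $C\leftarrow C'''$. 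This is exactly the reshuffling already performed in the proof of Theorem~\ref{thm-ex}, carried out one gate at a time.

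Each branch modifies only a constant-size suffix of the current form and performs a constant number of table lookups, so the per-gate cost is $O(1)$ and the total running time is $O(n)$. I do not expect any real obstacle here; the only thing to verify is that each one-step update always leaves the data in a triple whose flattening still matches the grammar $(T\mid\emptyseq)\,(HT\mid SHT)^*\,\sC$, and this is immediate from the case analysis above.
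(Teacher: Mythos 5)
Your proposal is correct and follows essentially the same route as the paper: maintain a normal form for the prefix read so far and absorb each new gate in constant time by a bounded local rewrite of the rightmost few symbols. The only difference is one of detail---you explicitly carry out the case distinction (via the factorizations $\sC=\sH\sS$, $\sS T=T\sS$, and $TT=S$) that the paper dismisses as ``a simple case distinction'' affecting at most the rightmost five elements.
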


\begin{proof}
  The proof of Theorem~\ref{thm-ex} already contains an algorithm
  for reducing any sequence of Clifford+$T$ operators to
  Matsumoto-Amano normal form. However, in the stated form, it is
  perhaps not obvious that the algorithm runs in linear time. Indeed,
  a naive implementation of the first step would require up to $n$
  searches of the entire sequence for a term of the form $T\sS T$,
  which can take time $O(n^2)$.

  One obtains a linear time algorithm from the following observation:
  if $M$ is already in Matsumoto-Amano normal form, and $A$ is either
  a Clifford operator or $T$, then $MA$ can be reduced to
  Matsumoto-Amano normal form in constant time. This is trivial when
  $A$ is a Clifford operator, because it will simply be absorbed into
  the rightmost Clifford operator of $M$. In the case where $A=T$, a
  simple case distinction shows that at most the rightmost 5 elements
  of $MA$ need to be updated. The normal form of a sequence of
  operators $A_1A_2\ldots A_n$ can now be computed in linear time by
  starting with $M=I$ and repeatedly right-multiplying by
  $A_1,\ldots,A_n$, reducing to normal form after each step.
\end{proof}

\section{$T$-Optimality}

\begin{corollary}
  Let $M$ be single-qubit Clifford+$T$ operator, and assume that $M$
  can be written with $T$-count $n$. Then there exists a
  Matsumoto-Amano normal form for $M$ with $T$-count at most $n$.
\end{corollary}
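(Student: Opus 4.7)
The plan is to trace through the normalization algorithm implicit in the proof of Theorem~\ref{thm-ex} and observe that it never increases the $T$-count. Starting from any decomposition $M = C_n\,T\,C_{n-1}\,\cdots\,C_1\,T\,C_0$ of $T$-count $n$, the normalization proceeds in two phases. In the first phase, whenever an interior Clifford $C_i$ lies in $\sS$, we apply (\ref{eqn-TSTS}) to replace $T\,C_i\,T$ by a single Clifford operator, which strictly decreases the $T$-count by $2$. Repeating this until no interior $C_i$ lies in $\sS$ yields an expression of $T$-count some $m\leq n$.

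In the second phase, I would apply the identities (\ref{eqn-CHS}), (\ref{eqn-CSHS}), (\ref{eqn-SHHS}), and (\ref{eqn-STTS}) to push all $\sS$-factors to the right, where they are absorbed into the trailing Clifford. None of these rewrites introduces or removes a $T$, so the resulting word $\sH\,T\,\sHp\,\cdots\,\sHp\,T\,\sC$ (as in (\ref{eqn-HTHpTC})) still has $T$-count $m$. Finally, choosing specific representatives from $\sH$ and $\sHp$ produces an expression in the form $(T\mid\emptyseq)\,(HT\mid SHT)^*\,\cC$; by inspection, each of the $m$ surviving $T$'s becomes either the optional leading $T$ or the $T$ inside an $HT$ or $SHT$ syllable, so the resulting normal form has $T$-count exactly $m\leq n$.

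There is essentially no obstacle here, since every rewrite used in the existence proof is either $T$-count-preserving or $T$-count-decreasing; the corollary is largely a bookkeeping observation on top of Theorem~\ref{thm-ex}. An alternative, equally short route would use the incremental algorithm from Corollary~\ref{cor-efficient}: right-multiplying a normal form by a Clifford never changes the $T$-count, while right-multiplying by $T$ increases it by at most $1$ (from the case analysis mentioned in that proof), so processing an input sequence containing $n$ $T$'s produces a normal form of $T$-count at most $n$.
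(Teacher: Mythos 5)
Your proposal is correct and takes essentially the same route as the paper: the paper's proof is simply the observation that the reduction from \eqref{eqn-M} to \eqref{eqn-HTHpTC} in the proof of Theorem~\ref{thm-ex} never increases the $T$-count, which is exactly what your two-phase bookkeeping spells out in more detail. The alternative incremental argument you sketch at the end is also sound but unnecessary here.
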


\begin{proof}
  This is an immediate consequence of the proof of
  Theorem~\ref{thm-ex}, because the reduction from (\ref{eqn-M})
  to (\ref{eqn-HTHpTC}) does not increase the $T$-count.
\end{proof}

\section{Uniqueness}

\begin{theorem}[Matsumoto and Amano {\cite[Thm 1(II)]{MA08}}]\label{thm-ma}
  If $M$ and $N$ are two different Matsumoto-Amano normal forms, then
  they describe different operators.
\end{theorem}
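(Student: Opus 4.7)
The plan is to induct on the $T$-count of the normal form, using a numerical invariant of the operator itself to identify the leftmost syllable.

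First, I would attach to each Clifford+$T$ operator $M$ its \emph{denominator exponent} $k(M)$, defined as the least $k\geq 0$ with $\sqrt{2}^{\,k} M \in M_2(\Zw)$, and, when $k(M)\geq 1$, its \emph{residue} $\rho(M)$, obtained by reducing the entries of $\sqrt{2}^{\,k(M)} M$ modulo $\sqrt{2}$. Well-definedness uses only that $\sqrt{2}$ generates a proper ideal in $\Zw$, so the residue is a nonzero element of $M_2(\Zbw)$. I would then catalogue how $k$ and $\rho$ transform under left-multiplication by the generators $T$, $S$, $\omega$, and $H$. The Clifford generators $S$ and $\omega$ preserve $k$ and act reversibly on $\rho$; multiplication by $T$ also preserves $k$ (since $T$ is diagonal with entries in $\Zw$) and permutes residues; and multiplication by $H$ changes $k$ by $\pm 1$ in a manner dictated by $\rho$. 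The relations \eqref{eqn-STTS} and \eqref{eqn-TSTS} describe the compatibility of $T$ with $\sS$ that will be needed below.

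Second, and this is the technical core, I would prove a \emph{syllable-recognition lemma}: if $M = s\cdot M'$ is in Matsumoto-Amano normal form with leftmost syllable $s\in\{T, HT, SHT\}$ and $M'$ again in normal form (possibly reduced to its rightmost Clifford), then $s$ is determined by the pair $(k(M), \rho(M))$. This is carried out by computing the residues of $TM'$, $(HT)M'$, and $(SHT)M'$ and showing that as $M'$ ranges over all legitimate tails, the three resulting residue families fall into three pairwise disjoint subsets of $M_2(\Zbw)$, so the leftmost syllable can be read off from $\rho(M)$ alone (with $k(M)$ used to separate the empty-syllable case from the rest).

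With the syllable-recognition lemma in hand, the theorem follows by induction on $T$-count. In the base case the normal form contains no syllables, so it is just the rightmost Clifford, and uniqueness is trivial since the Clifford group is a fixed 192-element set. In the inductive step, two equal normal forms $M = N$ with at least one syllable share their leftmost syllable by the lemma; canceling it produces two shorter normal forms representing the same operator, to which the inductive hypothesis applies. The main obstacle will be the second step: verifying that the three residue families corresponding to the three possible leftmost syllables are genuinely disjoint, uniformly over all admissible tails $M'$, is where the real content sits. I expect this to reduce to a small finite check in $M_2(\Zbw)$, tedious rather than deep, but it must be executed carefully for the induction to close.
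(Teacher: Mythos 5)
Your overall strategy---induct on the number of syllables, using an arithmetic invariant of the operator to identify its leftmost syllable---is the right one, and it is the paper's strategy too. But you have chosen to run it directly in $U(2)$ over the ring $\Dw$, whereas the paper runs it on the Bloch sphere, in $SO(3)$ over $\Ds$. The $SO(3)$ route is substantially cleaner: there the least denominator exponent $k$ of $\hat M$ equals the $T$-count exactly, $k=0$ holds precisely for Clifford operators, $\hat H$ and $\hat S$ are integer matrices so the Clifford part never disturbs $k$, and the mod-$2$ parity pattern of $\rt{k}\hat M$ cycles through three easily distinguished classes (Lemma~\ref{lem-ma}, Figure~\ref{fig-so3-action}). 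The paper does carry out a version of your $U(2)$ analysis, but only later and for a different purpose (relating $T$-count and $H$-count to the denominator exponent of $U$ itself, Figure~\ref{fig:u2-transitions}), and that analysis is markedly heavier.

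There are two concrete problems with your plan as written. First, the invariant you propose is too coarse. Reducing the entries of $\rt{k}M$ modulo $\sqrt2$ lands in the four-element ring $\Zw/(\sqrt2)$, not in the sixteen-element ring $\Zbw=\Zw/(2)$ that you also mention, and the four-element quotient does not suffice: for $U_1=HTH$ and $U_2=THTH$ one computes $2U_1=\umatrix{1+\omega}{1-\omega}{1-\omega}{1+\omega}$ and $2U_2=\umatrix{1+\omega}{1-\omega}{\omega-\omega^2}{\omega+\omega^2}$, so both have least denominator exponent $k=2$ and both reduce modulo $\sqrt2$ to the all-$(1+\omega)$ matrix, yet their leftmost syllables are $HT$ and $T$ respectively. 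So your syllable-recognition lemma is false for mod-$\sqrt2$ residues; you must work modulo $2$. Relatedly, $k$ alone does not separate the empty-syllable case from the rest in $U(2)$: $k(T)=0=k(I)$ and $k(HT)=1=k(H)$, so that separation must also come from the residue.

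Second, and more seriously, even with mod-$2$ residues the transition under left multiplication by $H$ is \emph{not} well-defined on residues once the denominator exponent can drop. For normal forms containing the pattern $\cdots HTHT\cdots$ one reaches operators $U$ for which $\rho_{k+1}(U)$ is reducible (every entry divisible by $\sqrt2$ in $\Zbw$); the least-exponent residue $\rho_k(U)$ is then not a function of $\rho_{k+1}(U)$, because knowing $x\bmod 2$ does not determine $(x/\sqrt2)\bmod 2$. Your statement that $H$ changes $k$ ``in a manner dictated by $\rho$'' is true for the direction of the change but not for the resulting residue. Pinning down $\rho_k(U)$ after such a reduction requires an input beyond ring arithmetic on residues, namely the unitarity of $U$; this is exactly the content of Lemmas~\ref{lem-columncannotbe1s} and~\ref{lem-reduction}, and it is where the real work of the $U(2)$ version sits. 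Your proposal can be completed along these lines, but as it stands it is missing this ingredient---or, more economically, you can avoid the issue entirely by passing to the Bloch sphere first.
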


We give a simplified version of Matsumoto and Amano's proof. Like
Matsumoto and Amano, we use the Bloch sphere representation of unitary
operators.  Recall that each single-qubit unitary operator can be
represented as a rotation of the Bloch sphere, or equivalently, as an
element of $SO(3)$, the real orthogonal $3\times 3$ matrices with
determinant 1. The relationship between an operator $U\in U(2)$ and
its Bloch sphere representation $\hat U\in SO(3)$ is given by
\begin{equation}\label{eqn-bloch-conversion}
  \hat U\zmatrix{c}{x\\y\\z} ~=~ \zmatrix{c}{x'\\y'\\z'}
  \quad\Longleftrightarrow\quad
  U\,(xX+yY+zZ)\;U\da ~=~ x'X + y'Y + z'Z,
\end{equation}
where $X$, $Y$, and $Z$ are the Pauli operators. The Bloch sphere
representations of the operators $H$, $S$, and $T$ are:
\begin{equation}\label{bloch-generators}
 \hat H = \zmatrix{ccc}{0&0&1\\0&-1&0\\1&0&0}, \quad
\hat S = \zmatrix{ccc}{0&-1&0\\1&0&0\\0&0&1}, \quad
\hat T = \frac{1}{\sqrt{2}}\xmatrix{.9}{.8}{ccc}{1&-1&0\\1&1&0\\0&0&\sqrt{2}}.
\end{equation}
The assignment $U\mapsto \hat U$ defines a group homomorphism from
$U(2)$ to $SO(3)$, and we write $\hat\sC$ for the image of $\sC$ under
this homomorphism.

\begin{remark}\label{rem-bloch}
  The elements of $\hat\sC$ are the Bloch sphere representations of
  the Clifford operators. Since global phases are lost in the Bloch
  sphere representation, there are 24 such operators. They are
  precisely those elements of $SO(3)$ that can be written with matrix
  entries in $\s{-1,0,1}$, or equivalently, the 24 symmetries of the
  cube $\s{(x,y,z)\mid -1\leq x,y,z\leq 1}$.
\end{remark}

\begin{definition}
  Recall that $\N$ denotes the natural numbers including 0; $\Z$
  denotes the integers; and $\Zb$ denotes the integers modulo 2.
  We define three subrings of the real numbers:
  \begin{itemize}
  \item $\D = \Z[\frac12] = \s{\frac{a}{2^n}\mid a\in\Z,
      n\in\N}$. This is the ring of {\em dyadic fractions}.
  \item $\Zs = \s{a+b\sqrt 2 \mid a,b\in\Z}$. This is the ring of {\em
      quadratic integers} with radicand 2.
  \item $\Ds = \Z[\frac1{\sqrt2}] = \s{r+s\sqrt 2 \mid r,s\in\D}$.
  \end{itemize}
  We will also need the following two subrings of the complex
  numbers. Recall that $\omega = e^{i\pi/4} = (1+i)/\sqrt{2}$ is an
  $8$th root of unity satisfying $\omega^2=i$ and $\omega^4=-1$.
  \begin{itemize}
  \item $\Zw = \s{a\omega^3+b\omega^2+c\omega+d \mid
      a,b,c,d\in\Z}$. This is the ring of {\em cyclotomic integers of
      degree 8}.
  \item $\Dw = \s{a\omega^3+b\omega^2+c\omega+d \mid a,b,c,d\in\D}$.
  \end{itemize}
\end{definition}

\begin{remark}\label{rem-necessary}
  If $U$ is a Clifford+$T$ operator, then its matrix entries are in
  the ring $\Dw$. This is trivially true, because it holds for each of
  the generators {\eqref{eqn-generators}}. Moreover, the entries of
  the corresponding Bloch sphere operator $\hat U$ are from the ring
  $\Ds$. This is also trivial from {\eqref{bloch-generators}}.
\end{remark}

\begin{definition}[Parity]
  Consider the unique ring homomorphism $\Z\to\Zb$, mapping $a\in\Z$
  to $\parity a\in\Zb$, where $\parity a=0$ if $a$ is even and
  $\parity a=1$ if $a$ is odd. We define the {\em parity map}
  $p:\Zs\to\Zb$ by $p(a+b\sqrt{2}) = \parity a$. Note that this is
  also a ring homomorphism. We refer to $p(x)$ as the {\em parity} of
  $x$.
\end{definition}

\begin{definition}[Denominator exponent]\label{def-denomexp}
  For every element $q\in\Ds$, there exists some natural number $k\geq
  0$ such that $\rt{k}q\in\Zs$, or equivalently, such that $q$ can
  be written as $x/\rt{k}$, for some quadratic integer
  $x$. Such $k$ is called a {\em denominator exponent} for $q$. The
  least such $k$ is called the {\em least denominator exponent} of
  $q$.

  More generally, we say that $k$ is a denominator exponent for a
  vector or matrix if it is a denominator exponent for all of its
  entries. The least denominator exponent for a vector or matrix is
  therefore the least $k$ that is a denominator exponent for all of
  its entries.
\end{definition}

\begin{definition}[$k$-parity]
  Let $k$ be a denominator exponent for $q\in\Ds$. We define the {\em
    $k$-parity} of $q$, in symbols $p_k(q)\in\Zb$, by $p_k(q) =
  p(\rt{k} q)$.  The $k$-parity of a vector or
  matrix is defined componentwise.
\end{definition}

\begin{figure}
  \[
  \xymatrix{
    *+{\makebox[0in][r]{Start:~}\zmatrix{ccc}{1&0&0 \\ 0&1&0 \\ 0&0&1}}
    \ar[d]^<>(.5){\displaystyle T}_<>(.5){k\pp}
    \ar@<1.5ex>@(dr,r)[]+R_<>(.53){\displaystyle \sC}
    \\
    *+{\zmatrix{ccc}{1&1&0 \\ 1&1&0 \\ 0&0&0}}
    \ar@<-0.5ex>@/_/[d]_<>(.5){\displaystyle H}
    \\
    *+{\zmatrix{ccc}{0&0&0 \\ 1&1&0 \\ 1&1&0}}
    \ar[r]_<>(.5){\displaystyle S}
    \ar@<-0.5ex>@/_/[u]_<>(.6){\displaystyle T}_<>(.25){k\pp}
    &
    *+{\zmatrix{ccc}{1&1&0 \\ 0&0&0 \\ 1&1&0}}
    \ar@<-2ex>@/_4ex/[ul]_<>(.5){\displaystyle T}^<>(.5){k\pp}
  }
  \]
  \caption{The left action of Matsumoto-Amano normal forms on
    $k$-parities over $SO(3)$. All matrices are written modulo the
    right action of the Clifford group, i.e., modulo a permutation of
    the columns.}\label{fig-so3-action}
\end{figure}
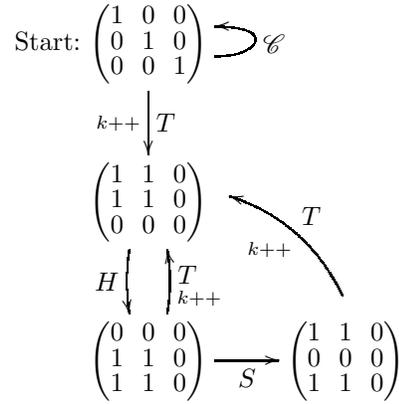

\begin{remark}\label{rem-right-action}
  Let $C$ be any Clifford operator, and $\hat C$ its Bloch sphere
  representation. As noted above, the matrix entries of $\hat C$ are
  in $\s{-1,0,1}$; it follows that $\hat C$ has denominator exponent
  0. In particular, it follows that multiplication by $\hat C$ is a
  well-defined operation on parity matrices: for any $3\times
  3$-matrix $U$ with entries in $\Zb$, we define $U\bullet \hat C :=
  U\cdot p(\hat C)$. This defines a right action of the Clifford group
  $\hat\sC$ on the set of parity matrices.
\end{remark}

\begin{definition}\label{def-simg}
  Let $\sim_{\hat\sC}$ the be the equivalence relation induced by this
  right action. In other words, for parity matrices $U,V$, we write
  $U\sim_{\hat\sC} V$ if there exists some $\hat C\in \hat\sC$ such
  that $U\bullet \hat C=V$. In elementary terms, $U\sim_{\hat\sC} V$
  holds if and only if $U$ and $V$ differ by a permutation of columns.
\end{definition}

\begin{lemma}\label{lem-ma}
  Let $M$ be a Matsumoto-Amano normal form, and $\hat M\in SO(3)$ the Bloch
  sphere operator of $M$.  Let $k$ be the least denominator exponent
  of $\hat M$. Then exactly one of the following holds:\rm
  \begin{itemize}
  \item $k=0$, and $M$ is a Clifford operator.
  \item $k>0$, $p_k(\hat M)\sim_{\hat\sC} \zmatrix{ccc}{1&1&0 \\ 1&1&0 \\
      0&0&0}$, and the leftmost syllable of $M$ is $T$.
  \item $k>0$, $p_k(\hat M)\sim_{\hat\sC} \zmatrix{ccc}{0&0&0 \\ 1&1&0 \\ 1&1&0}$, and the
    leftmost syllable of $M$ is $HT$.
  \item $k>0$, $p_k(\hat M)\sim_{\hat\sC} \zmatrix{ccc}{1&1&0 \\ 0&0&0 \\
      1&1&0}$, and the leftmost syllable of $M$ is $SHT$.
  \end{itemize}
  Moreover, the $T$-count of $M$ is equal to $k$.
\end{lemma}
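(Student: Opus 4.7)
The plan is to induct on the number of syllables of $M$, following the transitions suggested in Figure~\ref{fig-so3-action}. The crucial observation is that the matrix $\sqrt{2}\,\hat T$ has entries in $\Zs$ (the $\sqrt{2}$ cancels the denominators in $\hat T$) and its parity is exactly
\[
  P_1 \;:=\; \zmatrix{ccc}{1&1&0\\1&1&0\\0&0&0};
\]
I write $P_2$ and $P_3$ for the other two patterns appearing in the lemma.

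The base case ($M$ with no syllables) is Remark~\ref{rem-bloch}: $M$ is Clifford, $\hat M\in\hat\sC$, so $k=0$ and the $T$-count is $0$. For the inductive step I write $M = \sigma M'$, where $\sigma\in\s{T,HT,SHT}$ is the leftmost syllable and $M'$ is a shorter Matsumoto-Amano normal form. By the MA syntax, $M'$ is either Clifford or begins with $HT$ or $SHT$ (never $T$), so the induction hypothesis places $p_{k'}(\hat M')$ in the class of a permutation matrix, $P_2$, or $P_3$ under $\sim_{\hat\sC}$.

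Since $p$ is a ring homomorphism that commutes with matrix multiplication, writing $\hat M' = A'/\rt{k'}$ with $A'\in\Zs^{3\times 3}$ yields
\[
  p_{k'+1}(\hat T\hat M')
  \;=\; p(\sqrt{2}\,\hat T)\cdot p(A')
  \;=\; P_1\cdot p_{k'}(\hat M').
\]
A small $3\times 3$ check in $\Zb$ shows $P_1\cdot\pi$ is a column-permutation of $P_1$ for any permutation matrix $\pi$, and $P_1\cdot P_2 = P_1\cdot P_3 = P_1$. In every allowed case the product is $\sim_{\hat\sC} P_1$, which is nonzero, so $k=k'+1$ really is the least denominator exponent of $\hat T\hat M'$; this handles $\sigma=T$. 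For $\sigma = HT$ or $SHT$ I then left-multiply by $\hat H$ or $\hat S\hat H$: these are Clifford (and hence preserve the denominator exponent) and modulo $2$ simply permute rows, sending $P_1$ to $P_2$ or $P_3$ respectively. The equality of the $T$-count and $k$ follows by induction, since each syllable contributes exactly one $T$ and raises $k$ by one.

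The one subtlety is to ensure that the parity matrix never collapses to zero at some step, which would force the denominator exponent to drop rather than rise. The sole configuration where this can happen is $p_{k'}(\hat M')\sim_{\hat\sC} P_1$ paired with $\sigma = T$, because $P_1\cdot P_1 = 0$ in $\Zb$; this corresponds to two consecutive $T$-syllables, which the MA syntax precisely forbids. With that case excluded the induction closes, and exactly one of the four cases of the lemma holds.
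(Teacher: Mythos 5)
Your proof is correct and takes essentially the same approach as the paper: an induction on the number of syllables that tracks how each syllable transforms the $k$-parity, exactly as encoded in the transition diagram of Figure~\ref{fig-so3-action}. You have in fact carried out explicitly the ``easy case distinction'' that the paper leaves to the reader, including the two points the figure glosses over --- that a nonzero $(k'+1)$-parity certifies that the least denominator exponent really increased, and that the only collapsing product $P_1\cdot P_1=0$ corresponds to the forbidden $TT$.
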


\begin{proof}
  By induction on the length of the Matsumoto-Amano normal form $M$.
  Figure~\ref{fig-so3-action} shows the action of Matsumoto-Amano
  operators on parity matrices. Each vertex represents a
  $\sim_{\hat\sC}$-equivalence class of $k$-parities. The vertex
  labelled ``Start'' represents the empty Matsumoto-Amano normal form,
  i.e., the identity operator. Each arrow represents left
  multiplication by the relevant operator, i.e., a Clifford operator,
  $T$, $H$, or $S$. Thus, each Matsumoto-Amano normal form, read from
  right to left, gives rise to a unique path in the graph of
  Figure~\ref{fig-so3-action}. The label $k\pp$ on an arrow indicates
  that the least denominator exponent increases by $1$. An easy case
  distinction shows that the parities and least denominator exponents
  indeed behave as shown in Figure~\ref{fig-so3-action}.  The claims
  of the lemma then immediately follow.
\end{proof}

\begin{proof}[Proof of Theorem~\ref{thm-ma}]
  This is an immediate consequence of Lemma~\ref{lem-ma}. Indeed,
  suppose that $M$ and $N$ are two Matsumoto-Amano normal forms
  describing the same unitary operator $U$. We show that $M=N$ by
  induction on the length of $M$. Let $k$ be the least denominator
  exponent of $U$. If $k=0$, then by Lemma~\ref{lem-ma}, both $M$ and
  $N$ are Clifford operators; they are then equal by assumption. If
  $k>0$, then by Lemma~\ref{lem-ma}, the Matsumoto-Amano normal forms
  $M$ and $N$ have the same leftmost syllable (either $T$, $HT$, or
  $SHT$), and the claim follows by induction hypothesis.
\end{proof}

\section{The Matsumoto-Amano exact synthesis algorithm}

As an immediate consequence of Lemma~\ref{lem-ma}, we obtain an
efficient algorithm for calculating the Matsumoto-Amano normal form of
any Clifford+$T$ operator, given as a matrix.

\begin{theorem}\label{thm:ma-decomposition}
  Let $U\in U(2)$ be some Clifford+$T$ operator. Let $k$ be the least
  denominator exponent of its Bloch sphere representation $\hat U$.
  Then the Matsumoto-Amano normal form $M$ of $U$ can be efficiently
  computed with $O(k)$ arithmetic operations.
\end{theorem}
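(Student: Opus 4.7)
The plan is to turn Lemma~\ref{lem-ma} into a peeling algorithm: at each step I read off the leftmost syllable of the Matsumoto--Amano normal form from the $k$-parity of the Bloch sphere representation, multiply it out on the left, and recurse on the (one-smaller) denominator exponent. Concretely, I would proceed as follows. First, compute $\hat U\in SO(3)$ from $U$; this is a constant-time operation (a $3\times 3$ matrix obtained from the $2\times 2$ matrix $U$ by the formula~\eqref{eqn-bloch-conversion}, applied to the three standard basis Paulis). Second, compute the least denominator exponent $k$ of $\hat U$ by inspecting its nine entries, each of which lies in $\Ds$; this is $O(1)$.

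Now I would iterate the following until $k=0$. Given $\hat U$ with least denominator exponent $k>0$, I form the $k$-parity matrix $p_k(\hat U)\in\Zb^{3\times 3}$; by Remark~\ref{rem-necessary} and Definition~\ref{def-denomexp} each entry of $\rt{k}\hat U$ lies in $\Zs$, so this is again $O(1)$. By Lemma~\ref{lem-ma}, $p_k(\hat U)$ is $\sim_{\hat\sC}$-equivalent to exactly one of the three patterns shown there, and the equivalence class determines the leftmost syllable $s\in\{T,HT,SHT\}$ of $M$ uniquely. I output $s$ and replace $U$ by $s\inv U$ (equivalently $\hat U$ by $\hat s\inv \hat U$). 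By Lemma~\ref{lem-ma} again, the new least denominator exponent is exactly $k-1$, so this loop terminates after exactly $k$ iterations, each of which costs $O(1)$ arithmetic operations on entries of $\Ds$.

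When the loop terminates, the remaining operator $U'$ has $\hat U'$ with denominator exponent $0$, so by Remark~\ref{rem-bloch} it is one of the $24$ symmetries of the cube, and $U'$ itself is one of $192$ Clifford operators. A constant-time table lookup (matching $U'$ against a precomputed list of the $192$ Cliffords) produces the unique $C\in\sC$ with $U'=C$. Appending $C$ to the previously output syllables yields the Matsumoto--Amano normal form of the original $U$, and correctness is immediate from Theorem~\ref{thm-ex} and Theorem~\ref{thm-ma}. The total cost is $O(k)+O(1)=O(k)$ arithmetic operations.

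The only point that is not entirely mechanical is the claim that the denominator exponent decreases by exactly $1$ at each peeling step; but this is precisely the content of the ``$k\pp$'' labels in Figure~\ref{fig-so3-action}, which is part of Lemma~\ref{lem-ma}. So there is no serious obstacle, and the theorem follows as an easy algorithmic corollary of the uniqueness argument.
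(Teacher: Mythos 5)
Your proposal is correct and follows essentially the same route as the paper: peel off the leftmost syllable by reading the $k$-parity class of $\hat U$ via Lemma~\ref{lem-ma}, multiply by the inverse syllable, and recurse until $k=0$, where the remainder is a Clifford operator; the $T$-count equalling $k$ guarantees termination in $k$ steps of constant cost. The only additions beyond the paper's argument are cosmetic (the explicit table lookup for the final Clifford and the remark about the ``$k\pp$'' labels), so nothing further is needed.
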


\begin{proof}
  Given $U$, first compute its Bloch sphere representation $\hat
  U$. This requires only a constant number of arithmetic operations by
  {\eqref{eqn-bloch-conversion}}. Let $k$ be the least denominator
  exponent of $\hat U$.  Let $M$ be the unique (but as yet unknown)
  Matsumoto-Amano normal form of $U$. Note that, by
  Lemma~\ref{lem-ma}, the $T$-count of $M$ is $k$. We compute $M$
  recursively. If $k=0$, then $M$ is a Clifford operator by
  Lemma~\ref{lem-ma}, and we have $M=U$. If $k>0$, we compute
  $p_k(U)$, which must be of one of the three forms listed
  in Lemma~\ref{lem-ma}. This determines whether the leftmost syllable
  of $M$ is $T$, $HT$, or $SHT$. Let $N$ be this syllable, so that
  $M=NM'$, for some Matsumoto-Amano normal form $M'$. Then $M'$ can be
  recursively computed as the Matsumoto-Amano normal form of $U'=
  N\inv U$; moreover, since $M'$ has $T$-count $k-1$, the recursion
  terminates after $k$ steps. Since each recursive step only requires
  a constant number of arithmetic operations, the total number of
  operations is $O(k)$.
\end{proof}

\section{A characterization of Clifford+$T$ on the Bloch sphere}

Theorem~\ref{thm:ma-decomposition} states that {\em if} $U$ is a
Clifford+$T$ operator, then an actual Clifford+$T$ circuit for it can
be efficiently synthesized. Trivially, this also yields a method for
{\em checking} whether a given operator $U$ with entries in the ring
$\Dw$ is in the Clifford+$T$ group: namely, apply the algorithm of
Theorem~\ref{thm:ma-decomposition}. This either yields a Clifford+$T$
decomposition of $U$, or else the algorithm fails. The algorithm could
potentially fail in three different ways: (a) at some step, $p_k(U)$
is not of one of the three forms listed in Lemma~\ref{lem-ma}; (b) at
some step, $k$ fails to decrease; or (c) we reach $k=0$ but the
operator $U$ is not Clifford.

Remarkably, none of these three failure conditions can ever happen:
Provided that $U$ is unitary with entries from $\Dw$, the algorithm of
Theorem~\ref{thm:ma-decomposition} will {\em always} yield a
Clifford+$T$ decomposition of $U$. This yields a kind of converse to
the first part of Remark~\ref{rem-necessary}, and a nice algebraic
characterization of the Clifford+$T$ group: it is exactly the group of
unitary matrices over the ring $\Dw$. This result was first proved by
Kliuchnikov et al.~{\cite{Kliuchnikov-etal}}, and later generalized to
multi-qubit operators in {\cite{Giles-Selinger}}.

We now show that Matsumoto and Amano's method also yields a converse
to the second part of Remark~\ref{rem-necessary}: an element of
$SO(3)$ is the Bloch sphere representation of some Clifford+$T$
operator if and only if its matrix entries are from the ring $\Ds$.
This is the Bloch sphere analogue of the theorem of
{\cite{Kliuchnikov-etal}}. Remarkably, the Bloch sphere version of
this result is actually stronger than the $U(2)$ version.

\begin{lemma}\label{lem:dot_product_is_simpler_in_ds}
  Let $U\in SO(3)$ be a special orthogonal matrix with entries in $\Ds$.
  Let $k$ be a denominator exponent of $U$, and let $v_1,v_2,v_3$ be
  the columns of $U$, with
  \[v_{\jay} = \frac{1}{\rt{k}}\left(\begin{matrix}
      a_{\jay}+b_{\jay}\sqrt{2} \\
      c_{\jay}+d_{\jay}\sqrt{2} \\
      e_{\jay}+f_{\jay}\sqrt{2}
    \end{matrix}\right),\]
  for $a_{\jay},\ldots,f_{\jay}\in\Z$.
  Then for all $\jay,\ell\in\s{1,2,3}$,
  \begin{equation}\label{eqn-abba}
    a_{\jay} b_{\ell}+b_{\jay} a_{\ell}+
            c_{\jay} d_{\ell}+d_{\jay} c_{\ell}+e_{\jay} f_{\ell}+f_{\jay} e_{\ell} = 0
  \end{equation}
  and
  \begin{equation}\label{eqn-aacc}
    a_{\jay} a_{\ell}+c_{\jay} c_{\ell}+e_{\jay} e_{\ell}+
    2(b_{\jay} b_{\ell}+d_{\jay} d_{\ell}+f_{\jay} f_{\ell})
    = 2^k\delta_{\jay,\ell}.
  \end{equation}
  Here $\delta_{\jay,\ell}$ denotes the Kronecker delta function. In
  particular, we have, for all $\jay\in\s{1,2,3}$,
  \begin{equation}\label{eqn-abcd}
    a_{\jay} b_{\jay}+
    c_{\jay} d_{\jay}+e_{\jay} f_{\jay} = 0
  \end{equation}
  and
  \begin{equation}\label{eqn-a2c2}
    a_{\jay}^2 + c_{\jay}^2 + e_{\jay}^2 + 2(b_{\jay}^2 + d_{\jay}^2 + f_{\jay}^2) = 2^k.
  \end{equation}
\end{lemma}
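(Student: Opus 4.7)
The plan is to exploit orthonormality of the columns of $U$. Since $U\in SO(3)$, we have $U^TU=I$, which means $v_{\jay}\cdot v_{\ell}=\delta_{\jay,\ell}$ for all $\jay,\ell\in\{1,2,3\}$. So the whole lemma is really just the statement $v_{\jay}\cdot v_{\ell}=\delta_{\jay,\ell}$ rewritten by separating the ``rational part'' from the ``$\sqrt{2}$ part''.

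Concretely, I would first multiply out
\[
\rt{k}v_{\jay}\cdot\rt{k}v_{\ell} = (a_{\jay}+b_{\jay}\sqrt2)(a_{\ell}+b_{\ell}\sqrt2)+(c_{\jay}+d_{\jay}\sqrt2)(c_{\ell}+d_{\ell}\sqrt2)+(e_{\jay}+f_{\jay}\sqrt2)(e_{\ell}+f_{\ell}\sqrt2),
\]
expand each factor, and collect terms. The rational part becomes $a_{\jay}a_{\ell}+c_{\jay}c_{\ell}+e_{\jay}e_{\ell}+2(b_{\jay}b_{\ell}+d_{\jay}d_{\ell}+f_{\jay}f_{\ell})$ and the $\sqrt2$-coefficient becomes $a_{\jay}b_{\ell}+b_{\jay}a_{\ell}+c_{\jay}d_{\ell}+d_{\jay}c_{\ell}+e_{\jay}f_{\ell}+f_{\jay}e_{\ell}$. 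On the other hand, since $v_{\jay}\cdot v_{\ell}=\delta_{\jay,\ell}\in\Z$, we have $\rt{k}v_{\jay}\cdot\rt{k}v_{\ell}=2^k\delta_{\jay,\ell}\in\Z$, which has no $\sqrt2$ component.

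Next, I would invoke the fact that $1$ and $\sqrt2$ are linearly independent over $\Q$ (equivalently, that $\Zs=\Z\oplus\sqrt2\,\Z$ as an abelian group), which lets me equate the rational and $\sqrt2$ parts on the two sides separately. The rational part yields equation~(\ref{eqn-aacc}), and the $\sqrt2$-coefficient yields equation~(\ref{eqn-abba}). Setting $\jay=\ell$ in these gives (\ref{eqn-a2c2}) and (\ref{eqn-abcd}) as immediate specializations.

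There is no real obstacle here beyond bookkeeping: the whole proof is essentially a bilinear expansion followed by a comparison of coefficients in $\Zs$. The only nontrivial conceptual ingredient is the linear independence of $1$ and $\sqrt2$ over $\Q$, which is standard.
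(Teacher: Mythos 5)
Your proposal is correct and follows essentially the same route as the paper: expand the inner product $\p{v_{\jay},v_{\ell}}$, observe it equals $\delta_{\jay,\ell}$ by orthonormality, and equate the rational and $\sqrt{2}$ parts separately (the paper leaves the linear independence of $1$ and $\sqrt{2}$ over $\mathbb{Q}$ implicit, but that is the same justification). The specializations to $\jay=\ell$ are handled identically.
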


\begin{proof}
  Computing the inner product, we have
  \begin{equation}  \label{eqn-dotprod_of_vl_and_vj}
    \p{v_{\jay},v_{\ell}} =
    \frac{1}{2^k}
    \left(
      a_{\jay} a_{\ell}+c_{\jay} c_{\ell}+e_{\jay} e_{\ell}+
      2(b_{\jay} b_{\ell}+d_{\jay} d_{\ell}+f_{\jay} f_{\ell})
      +\sqrt{2}(a_{\jay} b_{\ell}+b_{\jay} a_{\ell}+
      c_{\jay} d_{\ell}+d_{\jay} c_{\ell}+e_{\jay} f_{\ell}+f_{\jay} e_{\ell})
    \right).
  \end{equation}
  Since $U$ is orthogonal, we have $\p{v_{\jay},v_{\jay}}=1$, and
  $\p{v_{\jay},v_{\ell}} = 0$ when $\ell \ne \jay$.
  Therefore, the coefficient of $\sqrt{2}$ in equation
  \eqref{eqn-dotprod_of_vl_and_vj} must be zero, proving
  \eqref{eqn-abba} and \eqref{eqn-aacc}. Equations {\eqref{eqn-abcd}} and
  {\eqref{eqn-a2c2}} immediately follow by letting $\jay=\ell$.
\end{proof}

\begin{remark}
  In Lemma~\ref{lem:dot_product_is_simpler_in_ds}, we have worked with
  columns $v_{\jay}$ of the matrix $U$. But since $U$ is orthogonal, the
  analogous properties also hold for the rows of $U$.
\end{remark}

\begin{lemma}\label{lem:k0-clifford-new}
  Let $U\in SO(3)$ be a special orthogonal matrix with entries in $\Ds$, and
  with least denominator exponent $k$. If $k=0$, then $U$ the Bloch sphere
  representation of some Clifford operator. If $k>0$, then $p_k(U)\sim_{\hat\sC} M$ for some
  $M\in\s{M_T,M_H,M_S}$, where
  \[ M_T =
  \zmatrix{ccc}{1&1&0 \\ 1&1&0 \\ 0&0&0},\quad M_H =
  \zmatrix{ccc}{0&0&0 \\ 1&1&0 \\ 1&1&0},\quad M_S =
  \zmatrix{ccc}{1&1&0 \\ 0&0&0 \\ 1&1&0}.
   \]
\end{lemma}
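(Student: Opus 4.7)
The plan is to use the norm and orthogonality equations \eqref{eqn-a2c2} and \eqref{eqn-aacc} from Lemma~\ref{lem:dot_product_is_simpler_in_ds}, together with their row analogues, to constrain $U$ itself when $k=0$ and to constrain the parity matrix $P := p_k(U)$ when $k>0$.

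For $k=0$, equation \eqref{eqn-a2c2} becomes a sum of nonnegative integers equal to $1$, which forces each column of $U$ to be $\pm$ a standard basis vector. Hence $U$ has entries in $\s{-1,0,1}$ and is the Bloch sphere representation of a Clifford operator by Remark~\ref{rem-bloch}.

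For $k > 0$, I would reduce \eqref{eqn-a2c2} and \eqref{eqn-aacc} modulo $2$ (using that $2^k\equiv 0$) to obtain two combinatorial constraints on $P$: each column has even Hamming weight in $\Zb^3$, and distinct columns are orthogonal with respect to the standard $\Zb^3$ inner product. The row analogues give the same constraints for rows. Minimality of $k$ also guarantees $P\ne 0$, since otherwise every entry of $\rt{k}U$ would have parity $0$ and hence $\rt{k-1}U$ would still lie in $\Zs$.

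The main (and essentially only) obstacle is the ensuing finite case analysis. The nonzero even-weight vectors in $\Zb^3$ are $(1,1,0)^T$, $(1,0,1)^T$, and $(0,1,1)^T$, and any two \emph{distinct} such vectors have $\Zb$-inner-product $1$. Thus all nonzero columns of $P$ must equal a common $v$ from this list, and the row-weight constraint then rules out the cases of one or three nonzero columns (both would produce rows of odd weight), forcing exactly two. Up to a column permutation --- which is precisely the equivalence $\sim_{\hat\sC}$ --- this yields $P\in\s{M_T, M_S, M_H}$ according to the three choices of $v$, completing the proof.
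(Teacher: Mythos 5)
Your proof is correct and follows essentially the same route as the paper: both cases rest on equations \eqref{eqn-a2c2} and \eqref{eqn-aacc} of Lemma~\ref{lem:dot_product_is_simpler_in_ds} reduced modulo $2$, giving even row/column weights plus $\Zb$-orthogonality of distinct columns, together with non-vanishing of $p_k(U)$ from minimality of $k$. The only difference is cosmetic: the paper first enumerates the four parity matrices with even row and column weights and then uses \eqref{eqn-aacc} to exclude the fourth, whereas you apply the orthogonality constraint first to force all nonzero columns to coincide and then use row weights to fix their number at two --- the same finite check in a different order.
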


\begin{proof}
  First consider the case $k=0$. Let $v_{\jay}$ be any column of $U$, with the notation of
  Lemma~\ref{lem:dot_product_is_simpler_in_ds}. By {\eqref{eqn-a2c2}},
  we have
  $a_{\jay}^2+c_{\jay}^2+e_{\jay}^2+2(b_{\jay}^2+d_{\jay}^2+f_{\jay}^2)=1$. Since
  each summand is a positive integer, we must have
  $b_{\jay},d_{\jay},f_{\jay} = 0$, and exactly one of $a_{\jay}$, $c_{\jay}$ or
  $e_{\jay}=\pm1$, for each $\jay=1,2,3$. Therefore, all the matrix
  entries are in $\s{-1,0,1}$, and the claim follows by
  Remark~\ref{rem-bloch}.

  Now consider the case $k>0$. Let $v_{\jay}$ be any row or column of
  $U$, with the notation of
  Lemma~\ref{lem:dot_product_is_simpler_in_ds}. By {\eqref{eqn-a2c2}},
  it follows that $a_{\jay}^2 + c_{\jay}^2 + e_{\jay}^2$ is even, and
  therefore an even number of $a_{\jay}$, $c_{\jay}$, and $e_{\jay}$
  have parity $1$. Therefore, each row or column of $p_k(U)$ has an
  even number of $1$'s. Moreover, since $k$ is the least denominator
  exponent of $U$, $p_k(U)$ has at least one non-zero entry. Modulo a
  permutation of columns, this leaves exactly four possibilities for
  $p_k(U)$:
  \[
  (a)~ \zmatrix{ccc}{1&1&0 \\ 1&1&0 \\ 0&0&0},\quad
  (b)~ \zmatrix{ccc}{0&0&0 \\ 1&1&0 \\ 1&1&0},\quad
  (c)~ \zmatrix{ccc}{1&1&0 \\ 0&0&0 \\ 1&1&0},\quad
  (d)~ \zmatrix{ccc}{1&1&0 \\ 1&0&1 \\ 0&1&1}.
   \]
   In cases (a)--(c), we are done. Case (d) is impossible because
   it implies that $a_1a_2+c_1c_2+e_1e_2$ is odd, contradicting the
   fact that it is even by {\eqref{eqn-aacc}}.
\end{proof}

\begin{lemma}\label{lem:applying_a_t_symbol_decreases_k}
  Let $U\in SO(3)$ be a special orthogonal matrix with entries in $\Ds$, and
  with least denominator exponent $k > 0$. Then there exists
  $N\in\s{T, HT, SHT}$ such that the least denominator exponent of
  $\hat N\inv U$ is $k-1$.
\end{lemma}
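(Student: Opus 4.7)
My plan is to use Lemma~\ref{lem:k0-clifford-new} to identify $p_k(U)$ up to column permutation as one of the three matrices $M_T$, $M_H$, or $M_S$, and accordingly take $N$ to be $T$, $HT$, or $SHT$. Since right multiplication by a Clifford $\hat C$ commutes with $\hat N^{-1}$ and preserves denominator exponents, I may first replace $U$ by $U\hat C$ for a suitable Clifford and assume $p_k(U)$ is \emph{exactly} one of $M_T$, $M_H$, or $M_S$. The cases $M_H$ and $M_S$ then reduce to $M_T$ by left multiplications: if $p_k(U)=M_S$, then $p_k(\hat S^{-1}U)=M_H$, since $\hat S^{-1}$ merely permutes and negates rows; and if $p_k(U)=M_H$, then $p_k(\hat H^{-1}U)=M_T$ for the same reason. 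These reductions preserve the denominator exponent because $\hat H,\hat S\in GL_3(\Z)$, so the heart of the matter is the $M_T$ case with $N=T$.

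In the $M_T$ case, write $V=\rt{k}\,U\in\Zs^{3\times 3}$ with entries $V_{ij}=a_{ij}+b_{ij}\sqrt{2}$. A direct calculation of $\hat T^{-1}U$ shows that this matrix has denominator exponent at most $k-1$ if and only if, for each column index $j\in\{1,2,3\}$, one has $V_{3j}\in\sqrt{2}\,\Zs$ and $V_{1j}\equiv V_{2j}\pmod{2\Zs}$. The membership $V_{3j}\in\sqrt{2}\,\Zs$ and the $\Z$-part $a_{1j}\equiv a_{2j}\pmod 2$ of the second congruence are immediate from the $M_T$ parity pattern. The remaining content, and the main obstacle, is to establish the matching second-order parities $b_{1j}\equiv b_{2j}\pmod 2$ for each $j$.

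To prove these second-order parities I will invoke Lemma~\ref{lem:dot_product_is_simpler_in_ds}. For the two columns $j\in\{1,2\}$ in which $a_{1j}$ and $a_{2j}$ are odd while $a_{3j}$ is even, equation~(\ref{eqn-abcd}) reads $a_{1j}b_{1j}+a_{2j}b_{2j}+a_{3j}b_{3j}=0$, and reducing mod $2$ yields $b_{1j}+b_{2j}\equiv 0$, which is exactly the required congruence. For the remaining column $j=3$, the entries $a_{13},a_{23},a_{33}$ are all even and equation~(\ref{eqn-abcd}) becomes vacuous mod $2$, so I will instead apply (\ref{eqn-abba}) with $\jay=1$ and $\ell=3$: in that equation, every $a$-factor with column index $3$ together with $a_{31}$ vanishes mod $2$, so the only surviving terms are $a_{11}b_{13}$ and $a_{21}b_{23}$, yielding $b_{13}+b_{23}\equiv 0$.

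Finally, to upgrade ``at most $k-1$'' to ``exactly $k-1$'', note that if $\hat N^{-1}U$ had denominator exponent strictly less than $k-1$, then $U=\hat N\,(\hat N^{-1}U)$ would have denominator exponent strictly less than $k$ (since $\hat N$ has denominator exponent $1$), contradicting the minimality of $k$.
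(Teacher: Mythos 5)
Your proof is correct and follows the same overall strategy as the paper's: use Lemma~\ref{lem:k0-clifford-new} to reduce to the case $p_k(U)=M_T$ (absorbing a right Clifford, and left-multiplying by $\hat H\inv$ or $\hat H\inv\hat S\inv$ in the $M_H$ and $M_S$ cases, which preserves denominator exponents), then verify by direct computation that $\hat T\inv U$ has least denominator exponent $k-1$, using the orthogonality relations of Lemma~\ref{lem:dot_product_is_simpler_in_ds}. The one place where you genuinely diverge is the third column, whose $k$-parity is $(0,0,0)^T$: the paper notes that this column already has denominator exponent $k-1$ before applying $\hat T\inv$, so that $\hat T\inv$ could a priori push it back up to $k$, and rules this out indirectly by applying {\eqref{eqn-a2c2}} to the rows of $\hat T\inv U$; you instead verify the same divisibility criterion uniformly for all three columns, obtaining the needed congruence $b_{13}\equiv b_{23}\pmod 2$ from the cross-column relation {\eqref{eqn-abba}} with $(\jay,\ell)=(1,3)$ --- a relation the paper states in Lemma~\ref{lem:dot_product_is_simpler_in_ds} but does not actually use in its own proof of this lemma. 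Your final minimality step is also packaged differently (a general ``$\hat N$ raises the exponent by at most one'' argument, versus the paper exhibiting an odd integer part in $\rt{k-1}\,\hat T\inv v$ directly); both are valid, and your variant has the small advantage of treating all columns and the minimality claim by one uniform mechanism.
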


\begin{proof}
  By Lemma~\ref{lem:k0-clifford-new}, we know that
  $p_k(U)\sim_{\hat\sC} M$, for some $M\in\{M_T, M_H, M_S\}$. We consider each
  of these cases.
  \begin{enumerate}
    \item \label{lemparitycase:applyt}$p_k(U) \sim_{\hat\sC} M_T$.
      By assumption, $U$ has two columns $v$ with $p_k(v) = (1,1,0)^T$.
      Let
      \[ v =
      \frac{1}{\rt{k}}\zmatrix{c}{a+b\sqrt{2}\\c+d\sqrt{2}\\e+f\sqrt{2}}
      \]
      be any such column. By {\eqref{eqn-abcd}}, we have $ab+cd+ef =
      0$. Since $\parity e = 0$, we have $\parity a \parity b+ \parity
      c \parity d = 0$. Since $\parity a = \parity c = 1$, we can
      conclude $\parity b+\parity d = 0$.  Applying $\hat{T}^{-1}$ to
      $v$, we compute:
      \[\hat{T}^{-1}v = \frac{1}{\rt{k+1}}
      \arraycolsep=.5ex
      \begin{pmatrix}
        c+a &+& (d+b)\sqrt{2} \\
        c-a &+& (d-b)\sqrt{2} \\
        e\sqrt{2} &+& 2f
      \end{pmatrix}
      = \frac{1}{\rt{k-1}}
      \begin{pmatrix}
        \frac{c+a}{2} &+& \frac{d+b}{\sqrt{2}} \\
        \frac{c-a}{2} &+& \frac{d-b}{\sqrt{2}} \\
        \frac{e}{\sqrt{2}} &+& f
      \end{pmatrix}
      = \frac{1}{\rt{k-1}}
      \begin{pmatrix}
        a' &+& b'{\sqrt{2}} \\
        c' &+& d'{\sqrt{2}} \\
        f &+ & e'{\sqrt{2}}
      \end{pmatrix}
      \]
      where $a' = \frac{c+a}{2}, b' = \frac{d+b}{2}, c' =
      \frac{c-a}{2}, d' = \frac{d-b}{2}$ and $e' = \frac{e}{2}$ are
      all integers. Hence, $k-1$ is a denominator exponent of
      $\hat{T}^{-1}v$. Moreover, since $a'+c'=c$ is odd, one of $a'$
      and $c'$ is odd, proving that $k-1$ is the least denominator
      exponent of $\hat{T}^{-1}v$.

      Now consider the third column $w$ of $U$, where $p_k(w) =
      (0,0,0)^T$. Then $k-1$ is a denominator exponent for $w$, so that
      $k$ is a denominator exponent for $\hat{T}^{-1}w$. Let
      \[ p_k(\hat{T}^{-1}w) = \zmatrix{c}{x\\y\\z}.
      \]
      As the least denominator exponent of the other two columns of
      $p_k(\hat{T}^{-1}U)$ is $k-1$, we have
      \[p_k(\hat{T}^{-1}U) \sim_{\hat\sC}
      \begin{pmatrix} 0&0&x \\0&0&y \\0&0&z\end{pmatrix}.
      \]
      But $\hat{T}^{-1}U$ is orthogonal, so by {\eqref{eqn-a2c2}},
      applied to each row of $\hat{T}^{-1}U$, we conclude that
      $x=y=z=0$. It follows that the least denominator exponent of
      $\hat{T}^{-1}U$ is $k-1$.
    \item \label{lemparitycase:applyh}$p_k(U) \sim_{\hat\sC} M_H$.
      In this case, $p_k(\hat{H}^{-1}U) \sim_{\hat\sC} p(\hat{H}^{-1})M_H = M_T$.
      We then continue as in case {\ref{lemparitycase:applyt}}.
    \item \label{lemparitycase:applys}$p_k(U) \sim_{\hat\sC} M_S$.
      In this case, $p_k(\hat{H}\inv\hat{S}^{-1}U) \sim_{\hat\sC} p(\hat{H}\inv\hat{S}^{-1})M_S = M_T$.
      We then continue as in case {\ref{lemparitycase:applyt}}.\qedhere
  \end{enumerate}
\end{proof}

Combining Lemmas~\ref{lem:k0-clifford-new} and
{\ref{lem:applying_a_t_symbol_decreases_k}}, we easily get the
following result:

\begin{theorem}\label{thm:completness_of_algorithm}
  Let $U\in SO(3)$. Then $U$ is the Bloch sphere representation of
  some Clifford+$T$ operator if and only if the entries of $U$ are in
  the ring $\Ds$. Moreover, a Matsumoto-Amano normal form this
  operator can be efficiently computed.
\end{theorem}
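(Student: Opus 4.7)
The plan is to combine Lemmas~\ref{lem:k0-clifford-new} and~\ref{lem:applying_a_t_symbol_decreases_k} by induction on the least denominator exponent $k$ of $U$. The forward implication will be handled entirely by Remark~\ref{rem-necessary}: the Bloch sphere representation of any Clifford+$T$ operator already has entries in $\Ds$, so there is nothing further to do in that direction.

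For the converse, I will assume $U \in SO(3)$ has entries in $\Ds$ with least denominator exponent $k$, and induct on $k$. In the base case $k=0$, Lemma~\ref{lem:k0-clifford-new} immediately places $U$ in $\hat\sC$, so $U$ is the Bloch image of a Clifford operator, which is trivially in Matsumoto-Amano normal form. For $k>0$, I will invoke Lemma~\ref{lem:applying_a_t_symbol_decreases_k} to produce a syllable $N \in \{T, HT, SHT\}$ such that $\hat N\inv U$ has least denominator exponent $k-1$. The matrix $\hat N\inv U$ still lies in $SO(3)$ and has entries in $\Ds$, since $\hat N\inv$ does. By the induction hypothesis, $\hat N\inv U$ is the Bloch image of some Clifford+$T$ operator $V$, and therefore $U$ is the Bloch image of the Clifford+$T$ operator $NV$. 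Existence of a Matsumoto-Amano normal form for $NV$ will then follow from Theorem~\ref{thm-ex}.

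For the efficiency statement, I will iterate the construction of the previous paragraph: compute $p_k(U)$, classify it as $M_T$, $M_H$, or $M_S$ up to column permutation using Lemma~\ref{lem:k0-clifford-new}, emit the corresponding syllable, and replace $U$ by $\hat N\inv U$. Lemma~\ref{lem:applying_a_t_symbol_decreases_k} guarantees that $k$ drops by exactly one at each step, so the process terminates after $k$ iterations; each iteration costs only a constant amount of arithmetic work, yielding $O(k)$ arithmetic operations in total. The resulting sequence is in Matsumoto-Amano normal form by construction, or, if one prefers, one can simply post-process via Theorem~\ref{thm:ma-decomposition} once existence is known. I do not anticipate any real obstacle beyond bookkeeping: the substantive content --- the trichotomy of $p_k(U)$ when $k>0$ and the strict decrement of $k$ under the appropriate $\hat N\inv$ --- has already been packaged in Lemmas~\ref{lem:k0-clifford-new} and~\ref{lem:applying_a_t_symbol_decreases_k}, and the theorem is essentially their direct combination by induction.
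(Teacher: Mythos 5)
Your proposal is correct and follows essentially the same route as the paper: the ``only if'' direction via Remark~\ref{rem-necessary}, and the ``if'' direction by induction on the least denominator exponent $k$, using Lemma~\ref{lem:k0-clifford-new} for the base case and Lemma~\ref{lem:applying_a_t_symbol_decreases_k} to peel off a syllable $N\in\s{T,HT,SHT}$ in the inductive step. Your added remarks on efficiency (termination after $k$ iterations, constant work per step) match the algorithm already given in Theorem~\ref{thm:ma-decomposition}.
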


\begin{proof}
  The ``only if'' direction is trivial by Remark~\ref{rem-necessary}.
  To prove the ``if'' direction, let $k$ be the least denominator
  exponent of $U$. We proceed by induction on $k$. If $k=0$, by
  Lemma~\ref{lem:k0-clifford-new}, $U$ is the Bloch sphere representation
  of some Clifford operator, and therefore of a Clifford+$T$ operator. If
  $k>0$, then by
  Lemma~\ref{lem:applying_a_t_symbol_decreases_k}, we can write
  $U=\hat NU'$, where $N\in\s{T,HT,SHT}$ and $U'$ has least
  denominator exponent $k-1$. By induction hypothesis, $U'$ is a
  Clifford+$T$ operator, and therefore so is $U$.
\end{proof}

\begin{remark}
  Combining this result with the algorithm of
  Theorem~\ref{thm:ma-decomposition}, we have a linear-time algorithm
  for computing a Matsumoto-Amano normal form for any Bloch sphere operator
  $U\in SO(3)$ with entries in $\Ds$. This normal form will be unique
  up to a global phase.
\end{remark}

As a corollary, we also get a new proof of the following result by
Kliuchnikov et al.~{\cite{Kliuchnikov-etal}}. The original proof in
{\cite{Kliuchnikov-etal}} uses a direct method, i.e., without going
via the Bloch sphere representation. It is interesting to note that
Theorem~\ref{thm:completness_of_algorithm} is apparently stronger than
Corollary~\ref{cor-u2}, in the sense that the theorem obviously
implies the corollary, whereas the opposite implication is not a
priori obvious.

\begin{corollary}\label{cor-u2}
  Let $U\in U(2)$ be a unitary matrix. Then $U$ is a Clifford+$T$
  operator if and only if the matrix entries of $U$ are in the ring
  $\Dw$.
\end{corollary}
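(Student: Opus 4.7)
The ``only if'' direction is immediate from Remark~\ref{rem-necessary}. For the ``if'' direction, my plan is to reduce to Theorem~\ref{thm:completness_of_algorithm} via the double cover $U(2)\twoheadrightarrow SO(3)$, and then handle the resulting global phase using algebraic number theory.

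Let $U\in U(2)$ have entries in $\Dw$. I first check that the Bloch sphere representation $\hat U\in SO(3)$ has entries in $\Ds$: by \eqref{eqn-bloch-conversion}, each entry of $\hat U$ is a real linear combination of products of entries of $U$ and $U\da$, hence of elements of $\Dw$; and a direct calculation using $\omega=(1+i)/\sqrt{2}$ shows that the real part of any element of $\Dw$ lies in $\Ds$. Theorem~\ref{thm:completness_of_algorithm} then produces a Clifford+$T$ operator $V$ with $\hat V=\hat U$. Since the kernel of the homomorphism $U(2)\to SO(3)$ consists of the scalar unitaries, $U=\lambda V$ for some $\lambda$ with $|\lambda|=1$. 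The identity $UV\da=\lambda I$ shows $\lambda\in\Dw$, as both $U$ and $V\da=V\inv$ have entries in $\Dw$. If I can prove $\lambda\in\s{1,\omega,\omega^2,\ldots,\omega^7}$, then $U=\omega^k V$ is manifestly a Clifford+$T$ operator and the proof is complete.

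The main obstacle is thus the following lemma: \emph{every $z\in\Dw$ with $|z|=1$ is a power of $\omega$.} My plan for this lemma: write $z=x/\rt{m}$ with $x\in\Zw$ and $m$ minimal, so that $x\bar x=2^m$. Since the prime $2$ is totally ramified in $\Zw$ as $(2)=(\pi)^4$ with $\pi=1-\omega$, and since the ideal $(\pi)$ is fixed by complex conjugation, the ideal identity $(x)(\bar x)=(\pi)^{4m}$ combined with $(\bar x)=\overline{(x)}$ forces $(x)=(\pi)^{2m}=(\rt{m})$. Hence $x=\rt{m}\cdot u$ for some unit $u\in\Zw$, so $z=u$ is itself a unit of $\Zw$ (and minimality forces $m=0$). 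Finally, Kronecker's theorem applies: after observing that the other complex embedding $\omega\mapsto\omega^3$ also sends $z$ into the unit circle (which follows from the norm of the unit $z$ being $\pm 1$ together with $|z|=1$), the algebraic integer $z$ has all Galois conjugates on the unit circle and is therefore a root of unity. Since the only roots of unity in the field generated by $\omega$ are the $8$th roots, $z=\omega^k$ as desired, completing the proof.
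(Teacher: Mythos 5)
Your proof is correct and follows essentially the same route as the paper's: pass to the Bloch sphere, invoke Theorem~\ref{thm:completness_of_algorithm} to obtain a Clifford+$T$ operator $V$ with $\hat V=\hat U$, and then deal with the scalar $\phi$ satisfying $\phi I=UV\da$, which lies in $\Dw$ and must be a power of $\omega$. The only difference is that the paper asserts this last implication ($\phi\in\Dw$ and $|\phi|=1$ imply $\phi=\omega^{\ell}$) without proof, whereas you supply a complete and correct justification via the total ramification of $2$ in $\Zw$ together with Kronecker's theorem.
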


\begin{proof}
  Again, the ``only if'' direction is trivial by
  Remark~\ref{rem-necessary}.  For the ``if'' direction, it suffices
  to note that, by {\eqref{eqn-bloch-conversion}}, whenever $U$ takes
  its entries in $\Dw$, then $\hat U$ takes its entries in
  $\Ds$. Therefore, by Theorem~\ref{thm:completness_of_algorithm},
  $\hat U$ is the Bloch sphere representation of some Clifford+$T$
  operator $V$. Since $\hat V=\hat U$, $U$ and $V$ differ only by a
  phase $\phi$. Since $\phi I=UV\da$, we must have $\phi\in\Dw$, but
  this implies that $\phi=\omega^{\ell}$ for some $\ell\in\Z$, so that
  $U=\phi V$ is Clifford+$T$.
\end{proof}

\section{Matsumoto-Amano normal forms and {\boldmath $U(2)$}}

By Theorems~\ref{thm:ma-decomposition} and
{\ref{thm:completness_of_algorithm}}, we can efficiently convert
between a Clifford+$T$ operator $U\in U(2)$, its Bloch sphere
representation $\hat U\in SO(3)$, and its Matsumoto-Amano normal
form. Moreover, the $T$-count of the Matsumoto-Amano normal form is
exactly equal to the least denominator exponent $k$ of $\hat U$. On
the other hand, the relationship between the $T$-count and the least
denominator exponent of $U$ is more complicated.  In this section, we
establish some results directly relating the $T$-count to properties
of the matrix $U\in U(2)$. Such results can be proved by induction on
Matsumoto-Amano normal forms.

\begin{lemma}\label{lem-divisiblebyroot2}
  For all $t$ in $\Zw$, $(t+t\da)$ is divisible by $\sqrt2$.
\end{lemma}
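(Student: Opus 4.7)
The plan is a direct computation, exploiting the explicit basis $\{1,\omega,\omega^2,\omega^3\}$ given in the definition of $\Zw$. First I would write $t = a\omega^3+b\omega^2+c\omega+d$ and compute $t\da$ in the same basis. Since $\omega\da = \omega^{-1} = \omega^7$ and $\omega^4=-1$, dagger acts on the basis elements by $\omega\da=-\omega^3$, $(\omega^2)\da=-\omega^2$, and $(\omega^3)\da = -\omega$. Adding $t$ and $t\da$, the $\omega^2$ terms cancel, and what remains is an expression of the form $(a-c)(\omega^3-\omega)+2d$.

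The next step is to recognize both summands as visibly divisible by $\sqrt{2}$. A direct calculation from $\omega=(1+i)/\sqrt{2}$ gives $\omega^3=(-1+i)/\sqrt{2}$, whence $\omega^3-\omega = -\sqrt{2}$; and of course $2=\sqrt{2}\cdot\sqrt{2}$. Thus $t+t\da = \sqrt{2}\bigl((c-a)+d\sqrt{2}\bigr)$.

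Finally, to make the divisibility assertion meaningful \emph{inside} $\Zw$, I would note that $\sqrt{2}=\omega-\omega^3\in\Zw$, so $\Zw$ contains $\sqrt{2}$ as an element, and the quotient $(c-a)+d\sqrt{2}=(c-a)+d(\omega-\omega^3)$ lies in $\Zw$ as well. This establishes the claim.

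There is no real obstacle to this argument; it is a routine verification once the basis is written out. The only point that deserves a brief explicit mention (to forestall any confusion about what ``divisible by $\sqrt{2}$'' means in a ring in which $\sqrt{2}$ is not an obvious element) is that $\sqrt{2}\in\Zw$ via the identity $\sqrt{2}=\omega-\omega^3$; after that, the rest is mechanical.
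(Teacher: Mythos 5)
Your proof is correct and is essentially the same computation as the paper's: the paper also writes $t=a\omega^3+b\omega^2+c\omega+d$ and factors $t+t\da=(-d\omega^3+d\omega+c-a)\sqrt2$, which matches your $\sqrt2\bigl((c-a)+d(\omega-\omega^3)\bigr)$. Your added remark that $\sqrt2=\omega-\omega^3\in\Zw$ is a sensible clarification that the paper leaves implicit.
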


\begin{proof}
  Let $t=a\omega^3+b\omega^2+c\omega+d$. A calculation shows that
  $t+t\da = (-d\omega^3+d\omega+c-a)\sqrt2$.
\end{proof}

\begin{definition}[Denominator exponent]
  Denominator exponents in $\Dw$ are defined similarly to those in
  $\Ds$ (cf. Definition~\ref{def-denomexp}).  Let $t\in\Dw$. A natural
  number $k\geq 0$ is called a {\em denominator exponent} for $t$ if
  $\rt{k} t \in \Zw$. The least such $k$ is called the {\em least
    denominator exponent} of $t$.
\end{definition}

\begin{definition}[Residues]
  Let $\Zbw = \Zw/(2) = \s{p\omega^3+q\omega^2+r\omega+s \mid
    p,q,r,s\in\Zb}$. Note that $\Zbw$ is a ring with exactly 16
  elements, which we call {\em residues}. We usually abbreviate a
  residue $p\omega^3+q\omega^2+r\omega+s$ by the string of binary
  digits $pqrs$.
  Consider the ring homomorphism $\rho : \Zw\to\Zbw$ defined by
  \[ \rho(a\omega^3+b\omega^2+c\omega+d) =
  \parity{a}\omega^3+\parity{b}\omega^2+\parity{c}\omega+\parity{d}.
  \]
  We call $\rho$ the {\em residue map}, and we call $\rho(t)$ the {\em
    residue} of $t$.
\end{definition}

We say that an operation $f:\Zw\to\Zw$ is {\em well-defined on
  residues} if for all $t,s$, $\rho(t)=\rho(s)$ implies
$\rho(f(t))=\rho(f(s))$. Table~\ref{tab-residue} shows several
important operations that are well-defined on residues.
\begin{table}
  \[ \begin{array}{c|c|c|c} \rho(t) & \rho(\sqrt{2}\,t) & \rho(t\da t) & \rho(\frac{t+t\da}{\sqrt2})\\\hline
    0000 & 0000 & 0000 & 0000 \\
    0001 & 1010 & 0001 & 1010 \\
    0010 & 0101 & 0001 & 0001 \\
    0011 & 1111 & 1010 & 1011 \\

    0100 & 1010 & 0001 & 0000 \\
    0101 & 0000 & 0000 & 1010 \\
    0110 & 1111 & 1010 & 0001 \\
    0111 & 0101 & 0001 & 1011 \\
  \end{array}\qquad
  \begin{array}{c|c|c|c} \rho(t) & \rho(\sqrt{2}\,t) & \rho(t\da t) & \rho(\frac{t+t\da}{\sqrt2})\\\hline
    1000 & 0101 & 0001 & 0001 \\
    1001 & 1111 & 1010 & 1011 \\
    1010 & 0000 & 0000 & 0000 \\
    1011 & 1010 & 0001 & 1010 \\

    1100 & 1111 & 1010 & 0001 \\
    1101 & 0101 & 0001 & 1011 \\
    1110 & 1010 & 0001 & 0000 \\
    1111 & 0000 & 0000 & 1010 \\
  \end{array}
  \]
  \caption{Some operations on residues}\label{tab-residue}
\end{table}

\begin{definition}[$k$-residues]
  Let $t\in\Dw$ and let $k$ be a (not necessarily least) denominator
  exponent for $t$.  The {\em $k$-residue of $t$}, in symbols
  $\rho_k(t)$, is defined to be
  \[ \rho_k(t) = \rho(\rt{k} t).
  \]
\end{definition}

\begin{remark}[Reducibility]
  We say that a residue $x\in\Zbw$ is {\em reducible} if it is of the
  form $\sqrt{2}\,y$, for some $y\in\Zbw$.  Table~\ref{tab-residue}
  shows that the reducible residues are $0000$, $0101$, $1010$, and
  $1111$.
\end{remark}

The concepts of denominator exponents, least denominator exponents,
residues, $k$-residues, and reducibility all extend in an obvious
componentwise way to vectors and matrices.

\begin{definition}
  Recall that $\sS$ is the 64-element subgroup of the Clifford group
  in $U(2)$ spanned by $S$, $X$ and $\omega$. In a way that is
  analogous to Remark~\ref{rem-right-action}, there is a well-defined
  right action of $\sS$ on the set of $2\times 2$ residue matrices,
  defined by $U\bullet A := U\cdot \rho(A)$. We write $\sim_{\sS}$ for
  the equivalence relation induced by this right action. In other
  words, for residue matrices $U,V$, we write $U\sim_{\sS}V$ if there
  exists some $A\in\sS$ such that $U\bullet A=V$. In elementary terms,
  $U\sim_{\sS}V$ holds if and only if $V$ can be obtained from $U$ by
  some combination of:
  \begin{enumerate}
  \item Shifting all of the entries in the matrix by 1,2 or 3
    positions.  This corresponds to the action of a power of $\omega$.
    \item Swapping the two columns. The corresponds to the action of $X$.
    \item Shifting the entries of the second column by two
      positions. This corresponds to the right action of $S$.
  \end{enumerate}
\end{definition}

\begin{lemma}\label{lem-columncannotbe1s}
  Let $k\geq 2$, and let
  $v=\frac{1}{\rt{k}}\begin{pmatrix}u\\t\end{pmatrix}$ be any vector with
  $u,t\in\Zw$. If $\rho(u),\rho(t)\in\s{0001,0010,0100,1000}$, then
  $v$ is not a unit vector.
\end{lemma}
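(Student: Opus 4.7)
The plan is to derive a contradiction from the unit-vector equation $u\da u + t\da t = 2^k$ by reducing both sides modulo $4\Zs$. Note first that $u\da u$ and $t\da t$ are real elements of $\Zw$, so they lie in $\Zs = \Z[\sqrt 2]$, and hence the reduction takes place in the quotient $\Zs/4\Zs$, which is a free $\Z/4$-module with basis $\{1,\sqrt 2\}$.

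To compute $u\da u$ modulo $4\Zs$, I would use the residue hypothesis to write $u = c\omega^j + 2u'$ for some odd integer $c$, some $j\in\{0,1,2,3\}$, and some $u'\in\Zw$; indeed the unique odd coefficient of $\rho(u)$ determines $j$, and $c$ is the corresponding coefficient of $u$. Expanding,
\[
u\da u ~=~ c^2 + 2c\bigl(\omega^{-j}u' + (\omega^{-j}u')\da\bigr) + 4\, u'\da u'.
\]
By Lemma~\ref{lem-divisiblebyroot2} applied to $t=\omega^{-j}u'\in\Zw$, the middle bracket equals $\sqrt2\,r$ for some $r\in\Zw$; and since the bracket and $\sqrt 2$ are real, so is $r$, so in fact $r\in\Zs$. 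Writing $r = r_1+r_2\sqrt2$ with $r_1,r_2\in\Z$, we get $2c\sqrt2\,r = 2cr_1\sqrt2 + 4cr_2 \equiv 2cr_1\sqrt2\pmod{4\Zs}$; and since $c$ is odd, $c^2\equiv 1\pmod 4$. Combining these observations,
\[
u\da u ~\equiv~ 1 + 2cr_1\sqrt2 \pmod{4\Zs}.
\]
In particular, the coefficient of $1$ in $u\da u$ (with respect to the basis $\{1,\sqrt 2\}$) is $\equiv 1\pmod 4$.

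The identical analysis applied to $t\da t$ shows that the coefficient of $1$ in $u\da u + t\da t$ is $\equiv 2\pmod 4$. On the other hand $2^k \equiv 0\pmod 4$ whenever $k\geq 2$, which contradicts $u\da u + t\da t = 2^k$. Hence $v$ cannot be a unit vector. The only ingredient beyond direct arithmetic is Lemma~\ref{lem-divisiblebyroot2}, used to extract a factor of $\sqrt 2$ from the cross term; the main obstacle is bookkeeping the reduction in $\Zs/4\Zs$, in particular observing that the $\sqrt 2$-coefficient of $u\da u$ is always even, so the two contributions to the $1$-coefficient of $u\da u + t\da t$ add up to exactly $2\pmod 4$ rather than $0$.
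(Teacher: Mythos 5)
Your proof is correct and follows essentially the same route as the paper's: both isolate the single odd coefficient to write $u$ and $t$ in the form $\omega^{\jay}(1+2a)$ (equivalently $c\omega^{\jay}+2u'$ with $c$ odd), expand $u\da u+t\da t=2^k$, invoke Lemma~\ref{lem-divisiblebyroot2} on the cross terms, and derive a divisibility contradiction --- the paper phrases it as $2$ not being divisible by $2\sqrt2$ in $\Zw$, you as the $1$-coefficient being $2\not\equiv 0\pmod 4$ in $\Zs/4\Zs$, which is the same obstruction. One cosmetic point: your closing remark that the evenness of the $\sqrt2$-coefficient of $u\da u$ is what forces the $1$-coefficient of the sum to be $2\bmod 4$ is a non sequitur (the two coefficients never interact in $\Zs/4\Zs$), but it is harmless, since you have already shown each $1$-coefficient is $\equiv 1\pmod 4$, which gives the conclusion directly.
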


\begin{proof}
  By assumption, we have $u=\omega^\jay(1+2a)$ and
  $t=\omega^\ell(1+2b)$, for some $\jay,\ell\in\Z$ and $a,b\in\Zw$.
  Suppose that $v$ is a unit vector. Then
  \begin{align*}
    2^k &= u\da u + t\da t  \\
        &= 1+2(a+a\da) +4a\da a + 1+2(b+b\da) +4b\da b,
  \end{align*}
  so $2 = 2^k - 2(a+a\da) - 2(b+b\da) - 4a\da a -4b\da b$. By
  Lemma~\ref{lem-divisiblebyroot2}, the right-hand side of this is
  divisible in $\Zw$ by $2 \sqrt{2}$, while the left-hand side is
  not. Thus we have a contradiction.
\end{proof}

\begin{lemma}\label{lem-reduction}
  Given a unitary $2\times 2$-matrix $U$ with entries in $\Dw$. Assume
  that $U$ has denominator exponent $k\ge2$, such that:
  \begin{enumerate}
    \item $\rho_{k+1}(U) \sim_{\sS} \umatrix{0101}{0101}{0101}{0101}$ and
    \item $\rho_k(H U) \sim_{\sS} \umatrix{0011}{0011}{0110}{0110}$ or
      $\rho_k(H U) \sim_{\sS} \umatrix{0011}{0011}{1001}{1001}$.
  \end{enumerate}
  Then $\rho_k(U) \sim_{\sS} \umatrix{1000}{0111}{0111}{1000}$.
\end{lemma}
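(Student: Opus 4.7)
The plan is to combine the two residue conditions with unitarity, first reducing to a small combinatorial problem at the level of residues in $\Zbw$, then refining with a lift to $\Zw$.

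I would begin by normalizing: using the right $\sS$-action on $U$, we may assume $\rho_{k+1}(U)$ equals the all-$0101$ matrix exactly (all conditions in the lemma are stated up to $\sim_{\sS}$, so this is legitimate). Reading Table~\ref{tab-residue}, each entry of $V := \rho_k(U)$ must then belong to $\{0010,0111,1000,1101\}$, which I will abbreviate $\{a,b,a',b'\}$.

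Next I would extract the content of condition (2) in two stages. First, $\rho_{k+1}(HU) = \rho(\sqrt{2})\cdot \rho_k(HU)$ (entry-wise), and by Table~\ref{tab-residue} each of the two target matrices from (2) maps under $\rho(\sqrt{2}\,\cdot)$ to the all-$1111$ matrix; since $1111$ is fixed by both $\omega$- and $\omega^2$-multiplication in $\Zbw$, the all-$1111$ matrix is $\sim_{\sS}$-invariant, so $\rho_{k+1}(HU)$ equals it exactly. On the other hand, since $\sqrt{2}\,H$ has integer entries, direct computation gives $\rho_{k+1}(HU)_{ij} = v_{1j}+v_{2j}$. Comparing yields $v_{1j}+v_{2j}=1111$ for $j=1,2$. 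The only pairs in $\{a,a',b,b'\}$ summing to $1111$ are $\{a,b'\}$ and $\{a',b\}$, so each column of $V$ is one of $(a,b'), (b',a), (a',b), (b,a')$. The type pattern of $V$ (treating $\{a,a'\}$ as type $A$ and $\{b,b'\}$ as type $B$) therefore lies, modulo column swap, in one of three $\sim_{\sS}$-classes: $\umatrix{A}{A}{B}{B}$, $\umatrix{A}{B}{B}{A}$, or $\umatrix{B}{B}{A}{A}$. Only the middle class matches the target, and I would separately check that all four column-sign configurations within it are $\sim_{\sS}$-equivalent to the target $\umatrix{1000}{0111}{0111}{1000}$ via combinations of $S$ and $\omega^2$-shifts. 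So it suffices to rule out the first and third type patterns.

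This is the main obstacle. For it I would lift $M := \rt{k}U$ by writing $m_{ij}=\tilde v_{ij}+2y_{ij}$ with $y_{ij}\in\Zw$, and compute $N := \rt{k}(HU) = H'M/\sqrt{2}$ entry-by-entry, using the key identity $\omega^3+\omega^2+\omega+1 = \sqrt{2}\,(\omega^2+\omega)$. This writes each entry of $\rho_k(HU)$ in the form (fixed residue) $+\sqrt{2}\,\sigma_j$, with $\sigma_j\in\Zbw$ encoding $\rho(y_{1j}+y_{2j})$; condition (2) then restricts the pair $(\sigma_1,\sigma_2)$ to a specific subset of $\Zbw\times\Zbw$. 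Combining this restriction with the unitarity equations $|m_{1j}|^2+|m_{2j}|^2=2^k$ and $\overline{m_{11}}\,m_{12}+\overline{m_{21}}\,m_{22}=0$, inspected modulo $2\sqrt{2}$, should force the type patterns $\umatrix{A}{A}{B}{B}$ and $\umatrix{B}{B}{A}{A}$ to be inconsistent with the admissible $(\sigma_1,\sigma_2)$. The hypothesis $k\ge 2$ is essential here: only then do the unitarity equations provide enough precision beyond mod-$2$ information to separate the three type-pattern classes.
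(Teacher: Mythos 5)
Your first two steps coincide with the paper's: condition (1) plus Table~\ref{tab-residue} pins the entries of $\rho_k(U)$ to $\{0010,0111,1000,1101\}$, and condition (2) forces $\rho_{k+1}(HU)$ to be the all-$1111$ matrix, whence each column of $\rho_k(U)$ sums to $1111$ and must be one of $(0010,1101)^T$, $(1101,0010)^T$, $(0111,1000)^T$, $(1000,0111)^T$. Your reduction to three type-pattern classes modulo $\sim_{\sS}$, and the verification that the ``mixed'' class collapses to the target, are correct and are just a reorganization of the paper's ``without loss of generality the columns are in $\s{(1000,0111)^T,(0111,1000)^T}$'' step.

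The gap is in the one step that actually carries content: excluding the patterns $\zmatrix{cc}{A&A\\B&B}$ and $\zmatrix{cc}{B&B\\A&A}$. You assert that a lift of $\rt{k}U$, the $(\sigma_1,\sigma_2)$ data extracted from condition (2), and the unitarity relations inspected modulo $2\sqrt2$ ``should force'' a contradiction, but you do not exhibit it, so as written this is a plan rather than a proof. The mechanism you name (unitarity mod $2\sqrt2$, with $k\ge2$ essential) is the right one, but your route through condition (2) and the $\sigma_j$ is unnecessary: in each excluded pattern some \emph{row} of $\rho_k(U)$ has both entries in $\{0010,1000\}\subseteq\s{0001,0010,0100,1000}$, and since the rows of a unitary matrix are unit vectors, Lemma~\ref{lem-columncannotbe1s} (which is exactly the mod-$2\sqrt2$ computation, packaged) rules this out immediately — this is how the paper finishes, observing that a row like $(1000,1000)$ is impossible. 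So the argument is completable along the lines you indicate, but you need to actually carry out that divisibility computation (or invoke Lemma~\ref{lem-columncannotbe1s} on a row) rather than defer it.
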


\begin{proof}
  Referencing Table~\ref{tab-residue}, we see that the first condition
  limits the possible choices for the entries of $\rho_k(U)$ to the
  set $\{0010, 0111, 1000, 1101\}$. The second condition implies that
  $\rho_{k+1}(H U)$ is reducible and in fact that each entry is
  $1111$.  This means each column of $\rho_k(U)$ must be either
  $(0010, 1101)^T, (1101, 0010)^T, (0111,1000)^T$ or
  $(1000,0111)^T$. As we are considering $\sim_{\sS}$-equivalence
  classes, we can assume without loss of generality that the columns
  are in $\s{(1000, 0111)^T, (0111, 1000)^T}$.  But by
  Lemma~\ref{lem-columncannotbe1s}, we cannot have a row like $(1000,
  1000)$, and therefore
  \[ \rho_k(U) \sim_{\sS} \umatrix{1000}{0111}{0111}{1000}.\qedhere
  \]
\end{proof}

\begin{convention}
  For the purposes of the following theorem, we will consider the
  following slight variant of the Matsumoto-Amano normal form: we
  decompose the rightmost Clifford operator into up to three gates as
  $(\emptyseq\mid H\mid SH)\,\cS$, where $\cS\in\sS$. Since every
  Clifford operator can be uniquely written in this way (see
  Lemma~\ref{lem-SH}), this does not change the normal form in an
  essential way. It does, however, allow us to define the $H$-count of
  a normal form, in addition to its $T$-count. Here is the regular
  expression for the modified normal form:
  \begin{equation}\label{eqn-ma-h}
    (T\mid\emptyseq)\,(HT\mid SHT)^*\,(\emptyseq\mid H\mid SH)\,\cS.
  \end{equation}
\end{convention}

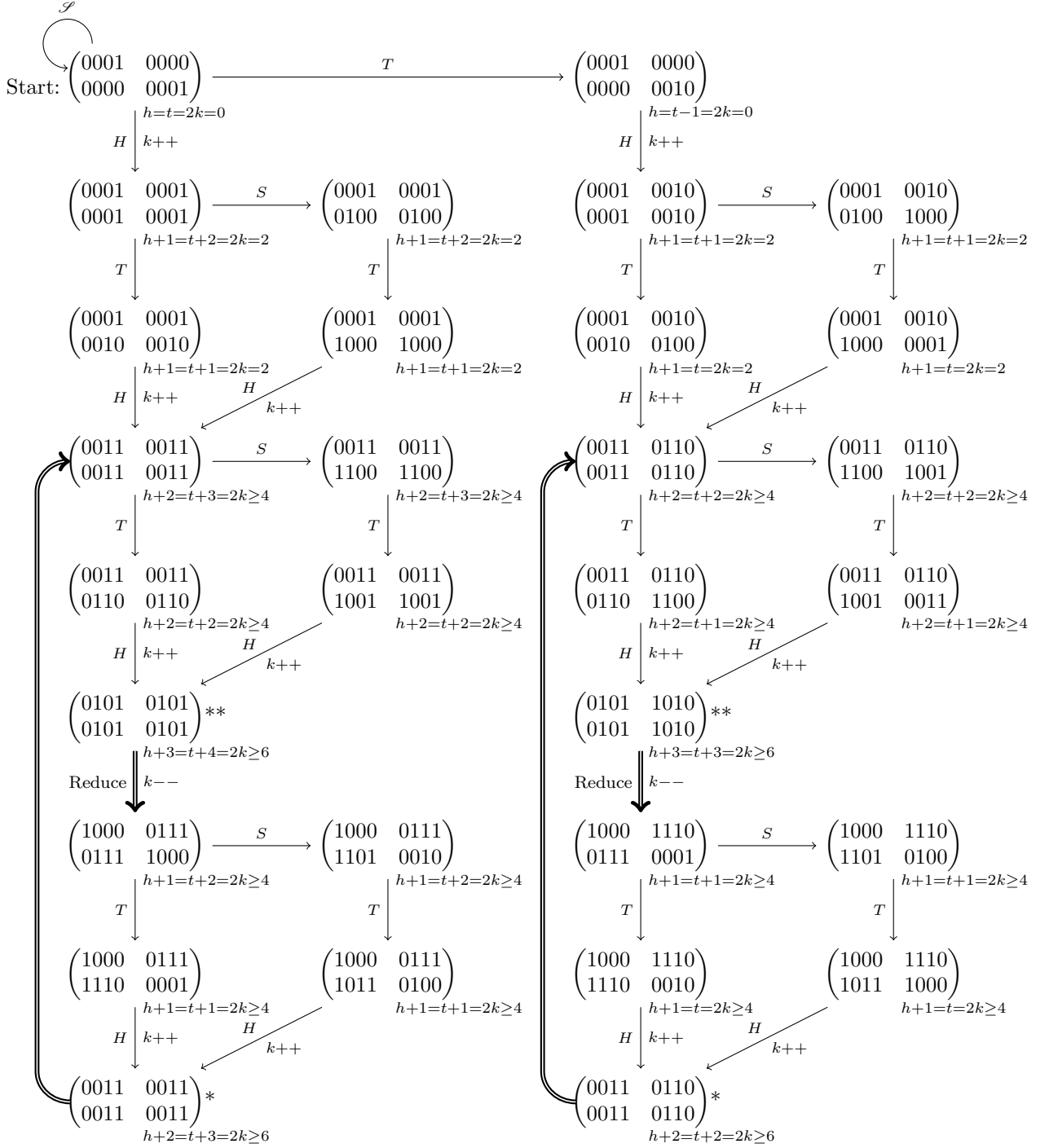
\begin{figure}
  \begin{tikzpicture}[node distance=1cm and 1.6cm]
    \def\labelstyle{\scriptstyle}
    \node (filler) {};

    \node (r1c1) [right=of filler]
          {\umatrix{0001}{0000}{0000}{0001}};

    \node [xshift=-4mm,yshift=-1.5mm] at (r1c1.west) {Start:};

    \draw [->] ([xshift=-7mm] r1c1.north) arc[start angle=0, end angle=270, x radius=4mm, y radius=4mm];
    \node at ([xshift=.15cm, yshift=.6cm] r1c1.north west) {$\labelstyle \sS$};

    \node [xshift=0cm,yshift=-.03cm,right] at (r1c1.south) {$\scriptstyle h=t=2k=0$};

    \node (r2c1) [below=of r1c1]
          {\umatrix{0001}{0001}{0001}{0001}}
          \leftrightfrom{H}{k\pp}{r1c1};
    \node [xshift=0cm,yshift=-.03cm,right] at (r2c1.south) {$\scriptstyle h+1=t+2=2k=2$};

    \node (r2c2) [right=of r2c1]
          {\umatrix{0001}{0001}{0100}{0100}}
          edge [<-] node[above] {$\labelstyle S$} (r2c1);
    \node [xshift=0cm,yshift=-.03cm,right] at (r2c2.south) {$\scriptstyle h+1=t+2=2k=2$};

    \node (r2c3) [right=of r2c2]
          {\umatrix{0001}{0010}{0001}{0010}};
    \node [xshift=0cm,yshift=-.03cm,right] at (r2c3.south) {$\scriptstyle h+1=t+1=2k=2$};

    \node (r2c4) [right=of r2c3]
          {\umatrix{0001}{0010}{0100}{1000}}
          edge [<-] node[above] {$\labelstyle S$} (r2c3);
    \node [xshift=0cm,yshift=-.03cm,right] at (r2c4.south) {$\scriptstyle h+1=t+1=2k=2$};

    \node (r1c3) [above=of r2c3]
          {\umatrix{0001}{0000}{0000}{0010}}
          edge [<-] node[above ]{$\labelstyle T$} (r1c1)
          \leftrightto{H}{k\pp}{r2c3};
    \node [xshift=0cm,yshift=-.03cm,right] at (r1c3.south) {$\scriptstyle h=t-1=2k=0$};

    \node (r3c1) [below=of r2c1]
          {\umatrix{0001}{0001}{0010}{0010}}
          edge [<-] node[left] {$\labelstyle T$} (r2c1);
    \node [xshift=0cm,yshift=-.03cm,right] at (r3c1.south) {$\scriptstyle h+1=t+1=2k=2$};

    \node (r3c2) [right=of r3c1]
          {\umatrix{0001}{0001}{1000}{1000}}
          edge [<-] node[left] {$\labelstyle T$} (r2c2);
    \node [xshift=0cm,yshift=-.03cm,right] at (r3c2.south) {$\scriptstyle h+1=t+1=2k=2$};

    \node (r3c3) [right=of r3c2]
          {\umatrix{0001}{0010}{0010}{0100}}
          edge [<-] node[left] {$\labelstyle T$} (r2c3);
    \node [xshift=0cm,yshift=-.03cm,right] at (r3c3.south) {$\scriptstyle h+1=t=2k=2$};

    \node (r3c4) [right=of r3c3]
          {\umatrix{0001}{0010}{1000}{0001}}
          edge [<-] node[left] {$\labelstyle T$} (r2c4);
    \node [xshift=0cm,yshift=-.03cm,right] at (r3c4.south) {$\scriptstyle h+1=t=2k=2$};

    \node (r4c1) [below=of r3c1]
          {\umatrix{0011}{0011}{0011}{0011}}
          \leftrightfrom{H}{k\pp}{r3c1}
          \starportfrom{H}{k\pp}{r3c2};
    \node [xshift=0cm,yshift=-.03cm,right] at (r4c1.south) {$\scriptstyle h+2=t+3=2k\ge4$};

    \node (r4c2) [right=of r4c1]
          {\umatrix{0011}{0011}{1100}{1100}}
          edge [<-] node[above] {$\labelstyle S$} (r4c1);
    \node [xshift=0cm,yshift=-.03cm,right] at (r4c2.south) {$\scriptstyle h+2=t+3=2k\ge4$};

    \node (r4c3) [right=of r4c2]
          {\umatrix{0011}{0110}{0011}{0110}}
          \leftrightfrom{H}{k\pp}{r3c3}
          \starportfrom{H}{k\pp}{r3c4};
    \node [xshift=0cm,yshift=-.03cm,right] at (r4c3.south) {$\scriptstyle h+2=t+2=2k\ge4$};

    \node (r4c4) [right=of r4c3]
          {\umatrix{0011}{0110}{1100}{1001}}
          edge [<-] node[above] {$\labelstyle S$} (r4c3);
    \node [xshift=0cm,yshift=-.03cm,right] at (r4c4.south) {$\scriptstyle h+2=t+2=2k\ge4$};

    \node (r5c1) [below=of r4c1]
          {\umatrix{0011}{0011}{0110}{0110}}
          edge [<-] node[left] {$\labelstyle T$} (r4c1);
    \node [xshift=0cm,yshift=-.03cm,right] at (r5c1.south) {$\scriptstyle h+2=t+2=2k\ge4$};

    \node (r5c2) [right=of r5c1]
          {\umatrix{0011}{0011}{1001}{1001}}
          edge [<-] node[left] {$\labelstyle T$} (r4c2);
    \node [xshift=0cm,yshift=-.03cm,right] at (r5c2.south) {$\scriptstyle h+2=t+2=2k\ge4$};

    \node (r5c3) [right=of r5c2]
          {\umatrix{0011}{0110}{0110}{1100}}
          edge [<-] node[left] {$\labelstyle T$} (r4c3);
    \node [xshift=0cm,yshift=-.03cm,right] at (r5c3.south) {$\scriptstyle h+2=t+1=2k\ge4$};

    \node (r5c4) [right=of r5c3]
          {\umatrix{0011}{0110}{1001}{0011}}
          edge [<-] node[left] {$\labelstyle T$} (r4c4);
    \node [xshift=0cm,yshift=-.03cm,right] at (r5c4.south) {$\scriptstyle h+2=t+1=2k\ge4$};

    \node (r6c1) [below=of r5c1]
          {\umatrix{0101}{0101}{0101}{0101}\makebox[0in][l]{**}}
          \leftrightfrom{H}{k\pp}{r5c1}
          \starportfrom{H}{k\pp}{r5c2};
    \node [xshift=0cm,yshift=-.03cm,right] at (r6c1.south) {$\scriptstyle h+3=t+4=2k\ge6$};

    \node (r6c3) [below=of r5c3]
          {\umatrix{0101}{1010}{0101}{1010}\makebox[0in][l]{**}}
          \leftrightfrom{H}{k\pp}{r5c3}
          \starportfrom{H}{k\pp}{r5c4};
    \node [xshift=0cm,yshift=-.03cm,right] at (r6c3.south) {$\scriptstyle h+3=t+3=2k\ge6$};

    \node (r7c1) [below=of r6c1]
          {\umatrix{1000}{0111}{0111}{1000}}
          edge [<-, thick, double] node[left] {\footnotesize Reduce}  node[right] {$\labelstyle k\mm$} (r6c1);
    \node [xshift=0cm,yshift=-.03cm,right] at (r7c1.south) {$\scriptstyle h+1=t+2=2k\ge4$};

    \node (r7c2) [right=of r7c1]
          {\umatrix{1000}{0111}{1101}{0010}}
          edge [<-] node[above] {$\labelstyle S$} (r7c1);
    \node [xshift=0cm,yshift=-.03cm,right] at (r7c2.south) {$\scriptstyle h+1=t+2=2k\ge4$};

    \node (r7c3) [right=of r7c2]
          {\umatrix{1000}{1110}{0111}{0001}}
          edge [<-, thick, double] node[left] {\footnotesize Reduce}  node[right] {$\labelstyle k\mm$} (r6c3);
    \node [xshift=0cm,yshift=-.03cm,right] at (r7c3.south) {$\scriptstyle h+1=t+1=2k\ge4$};

    \node (r7c4) [right=of r7c3]
          {\umatrix{1000}{1110}{1101}{0100}}
          edge [<-] node[above] {$\labelstyle S$} (r7c3);
    \node [xshift=0cm,yshift=-.03cm,right] at (r7c4.south) {$\scriptstyle h+1=t+1=2k\ge4$};

    \node (r8c1) [below=of r7c1]
          {\umatrix{1000}{0111}{1110}{0001}}
          edge [<-] node[left] {$\labelstyle T$} (r7c1);
    \node [xshift=0cm,yshift=-.03cm,right] at (r8c1.south) {$\scriptstyle h+1=t+1=2k\ge4$};

    \node (r8c2) [right=of r8c1]
          {\umatrix{1000}{0111}{1011}{0100}}
          edge [<-] node[left] {$\labelstyle T$} (r7c2);
    \node [xshift=0cm,yshift=-.03cm,right] at (r8c2.south) {$\scriptstyle h+1=t+1=2k\ge4$};

    \node (r8c3) [right=of r8c2]
          {\umatrix{1000}{1110}{1110}{0010}}
          edge [<-] node[left] {$\labelstyle T$} (r7c3);
    \node [xshift=0cm,yshift=-.03cm,right] at (r8c3.south) {$\scriptstyle h+1=t=2k\ge4$};

    \node (r8c4) [right=of r8c3]
          {\umatrix{1000}{1110}{1011}{1000}}
          edge [<-] node[left] {$\labelstyle T$} (r7c4);
    \node [xshift=0cm,yshift=-.03cm,right] at (r8c4.south) {$\scriptstyle h+1=t=2k\ge4$};

    \node (r9c1) [below=of r8c1]
          {\umatrix{0011}{0011}{0011}{0011}\makebox[0in][l]{*}}
          \leftrightfrom{H}{k\pp}{r8c1}
          \starportfrom{H}{k\pp}{r8c2};
    \node [xshift=0cm,yshift=-.03cm,right] at (r9c1.south) {$\scriptstyle h+2=t+3=2k\ge6$};

    \draw [->, thick, double] ([xshift=.2cm] r9c1.west) to
            [out=180,in=270] ([xshift=-1.6cm,yshift=.5cm]  r9c1) to
            ([xshift=-1.6cm,yshift=-.5cm]  r4c1) to
            [out=90,in=180] ([xshift=.2cm]  r4c1.west);

    \node (r9c3) [below=of r8c3]
          {\umatrix{0011}{0110}{0011}{0110}\makebox[0in][l]{*}}
          \leftrightfrom{H}{k\pp}{r8c3}
          \starportfrom{H}{k\pp}{r8c4};
    \node [xshift=0cm,yshift=-.03cm,right] at (r9c3.south) {$\scriptstyle h+2=t+2=2k\ge6$};

    \draw [->, thick, double] ([xshift=.2cm] r9c3.west) to
            [out=180,in=270] ([xshift=-1.6cm,yshift=.5cm]  r9c3) to
            ([xshift=-1.6cm,yshift=-.5cm]  r4c3) to
            [out=90,in=180] ([xshift=.2cm]  r4c3.west);
  \end{tikzpicture}
  \caption{The left action of Matsumoto-Amano normal forms on
    $k$-residues over $U(2)$. All matrices are written modulo the
    right action of $\sS$.}
  \label{fig:u2-transitions}
\end{figure}

\begin{theorem}
  Let $M$ be a Matsumoto-Amano normal form as in {\eqref{eqn-ma-h}},
  and let $U\in U(2)$ be the corresponding operator. Let $t$ be the
  $T$-count and $h$ the $H$-count of $M$. Let $k$ be the least
  denominator exponent of $U$, and let $R=\rho_k(U)$ be its
  $k$-residue. Then $R$ occurs (up to $\sim_{\sS}$, and excluding
  vertices labelled ``*'' or ``**'') exactly once in
  Figure~\ref{fig:u2-transitions}. Moreover, $t$, $h$, and $k$ satisfy
  the relationship indicated on the corresponding vertex in
  Figure~\ref{fig:u2-transitions}.
\end{theorem}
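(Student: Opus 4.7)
The plan is to prove this by induction on the length of $M$ read right-to-left, viewing $M$ as tracing a path through Figure~\ref{fig:u2-transitions} starting at the vertex ``Start''. The inductive hypothesis is that after processing the first portion of $M$, the operator built so far has $k$-residue equivalent (up to $\sim_{\sS}$) to the label of the current vertex, with matching annotation $(h,t,k)$. The base case $M\in\sS$ is immediate: the operator $U=M$ has $k=0$ and $\rho_0(U)\sim_{\sS}\umatrix{0001}{0000}{0000}{0001}$, corresponding to Start and its $\sS$-loop.

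For the inductive step, each arrow of the figure must be verified. Left multiplication by $T$ or $S$ is diagonal and preserves $k$: on columns, $T$ sends $(u,v)^T$ to $(u,\omega v)^T$ and $S$ sends it to $(u,\omega^2 v)^T$. The effect on residues is then read off from $\Zbw$-arithmetic and can be checked to match each $T$- and $S$-labelled arrow by direct computation; the $T$-arrows increment $t$, while the $S$-arrows only move within an $\sim_{\sS}$-class. Left multiplication by $H$ produces $H\begin{pmatrix}u\\v\end{pmatrix}=\frac{1}{\sqrt{2}}\begin{pmatrix}u+v\\u-v\end{pmatrix}$, introducing a potential denominator bump $k\to k+1$; since $\rho(u-v)=\rho(u+v)$ in $\Zbw$, the resulting $(k+1)$-residue has a constrained form that Table~\ref{tab-residue} lets us compute from the source vertex's residue. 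Whenever this computed residue is \emph{irreducible}, $k$ genuinely increases by $1$, matching the $k$\pp annotation, and $h$ is incremented.

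The main obstacle is at the vertices labelled $**$, where the naively computed $(k+1)$-residue is reducible --- specifically $\umatrix{0101}{0101}{0101}{0101}$ and $\umatrix{0101}{1010}{0101}{1010}$ (up to $\sim_{\sS}$). At such vertices $HU$ actually has least denominator exponent $k$ rather than $k+1$, so one must determine $\rho_k(HU)$ directly instead of dividing through by $\sqrt 2$. Lemma~\ref{lem-reduction} is tailored to exactly this situation: its hypotheses match the $**$-residues, and its conclusion forces $\rho_k(HU)\sim_{\sS}\umatrix{1000}{0111}{0111}{1000}$ (or the $S$-shifted variant $\umatrix{1000}{1110}{0111}{0001}$). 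This justifies the double-edged ``Reduce'' arrows and explains why the $**$-vertices must be excluded from the enumeration: they would double-count the same class of operators as the reduced target vertices. The $*$-vertices are similarly duplicates of earlier vertices, closing the state machine back onto itself, and are excluded for the same reason.

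Finally, the ``exactly once'' conclusion reduces to checking that the displayed non-excluded vertex labels are pairwise inequivalent modulo $\sim_{\sS}$, which is routine by inspection of the sixteen possible residues in each entry. Combined with Theorem~\ref{thm-ma} (uniqueness of the Matsumoto-Amano normal form), the path traced by $M$ through Figure~\ref{fig:u2-transitions} yields a unique vertex whose label equals $\rho_k(U)$ (up to $\sim_{\sS}$) and whose annotation matches the counts $(h,t,k)$ of $M$.
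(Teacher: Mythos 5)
Your proof follows essentially the same route as the paper's: induction on the length of the normal form, tracing a path through Figure~\ref{fig:u2-transitions}, verifying each arrow by direct residue computation, invoking Lemma~\ref{lem-reduction} at the ``**'' vertices where the naively computed $(k+1)$-residue is reducible, treating the ``*'' vertices as duplicates, and checking pairwise $\sim_{\sS}$-inequivalence of the remaining vertex labels. One small slip: left multiplication by $S$ does \emph{not} stay within a $\sim_{\sS}$-class (that relation is induced by the \emph{right} action of $\sS$, and e.g.\ $\umatrix{0001}{0001}{0001}{0001}$ and $\umatrix{0001}{0001}{0100}{0100}$ are inequivalent), which is exactly why the $S$-labelled arrows in the figure point to distinct vertices; this aside does not damage your argument, since you separately commit to verifying every arrow and the pairwise inequivalence of the vertices by direct computation.
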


\begin{proof}
  By induction on the length of the Matsumoto-Amano normal form $M$.
  The technique is the same as that of Lemma~\ref{lem-ma}, although
  there are more cases.  Figure~\ref{fig:u2-transitions} shows the
  action of Matsumoto-Amano operators on residue matrices. Each vertex
  (except vertices marked ``*'' and ``**'', which we discuss below)
  represents an $\sim_{\sS}$-equivalence class of $k$-residues. Each
  arrow represents left multiplication by the relevant operator. Thus,
  each Matsumoto-Amano normal form gives rise to a unique path in the
  graph, starting from the vertex labelled ``Start''.

  The two vertices labelled ``*'' are duplicates, and were only added
  for typographical reasons. Each such vertex should be considered the
  same as the respective vertex pointed to by the double arrow.  For
  the two vertices labelled ``**'', the associated residue matrix is 
  reducible, and reduces,
  along the double arrow marked ``Reduce'', to the residue matrix
  shown immediately below it. For the matrix marked ``**'' in the left
  column, this reduction is justified by
  Lemma~\ref{lem-reduction}. For the matrix marked ``**'' in the right
  column, it can be justified by an analogous argument.

  The label $k\pp$ on an arrow indicates that the least denominator
  exponent increases by $1$, and the label $k\mm$ indicates that it
  decreases by $1$. It is then an easy case distinction to show that
  the residues, least denominator exponents, $T$-counts, and
  $H$-counts indeed behave as shown in
  Figure~\ref{fig:u2-transitions}, and that no residue occurs more
  than once. This proves the lemma.
\end{proof}

\begin{corollary}
  Let $M$ be a Matsumoto-Amano normal form as in {\eqref{eqn-ma-h}},
  and let $U\in U(2)$ be the corresponding operator. Let $t$ be the
  $T$-count and $h$ the $H$-count of $M$, and let $k$ be the least
  denominator exponent of $U$. Then we have $2k-3\leq t\leq 2k+1$ and
  $2k-2\leq h\leq 2k$. Moreover, the differences $2k-t$ and $2k-h$
  only depend on the $k$-residue of $U$.
\end{corollary}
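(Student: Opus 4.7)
The plan is to derive this corollary directly from the preceding theorem, which is the substantive result: every Clifford+$T$ operator $U$ corresponds, via its $k$-residue, to exactly one non-duplicate, non-reducible vertex of Figure~\ref{fig:u2-transitions}, and the triple $(t,h,k)$ satisfies the linear relation written at that vertex. So my proof will simply read off the bounds from the figure and appeal to uniqueness of the vertex for the dependence claim.

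First I would fix an operator $U$ with Matsumoto-Amano normal form $M$ having $T$-count $t$ and $H$-count $h$, and let $k$ be the least denominator exponent of $U$ and $R=\rho_k(U)$. By the theorem, $R$ corresponds (up to $\sim_{\sS}$) to a unique vertex $V$ of Figure~\ref{fig:u2-transitions} that is not marked ``*'' or ``**'', and the relation between $t$, $h$, and $k$ at that vertex holds. Each such vertex carries a label of the shape $h+\alpha=t+\beta=2k$ (possibly with an additional equality or bound for the base cases near the top of the figure), so that the differences $2k-h=\alpha$ and $2k-t=\beta$ are integers read off directly from the label at $V$.

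Next I would enumerate the non-duplicate, non-reducible vertices and tabulate the pairs $(2k-t,\,2k-h)$ that occur. A direct inspection of Figure~\ref{fig:u2-transitions} shows that the values of $2k-t$ appearing on such vertices are exactly $-1, 0, 1, 2, 3$ (the value $-1$ coming from the $k=0$ vertex reached by a single $T$, and the value $3$ coming from the vertices with label $h+2=t+3=2k$), while the values of $2k-h$ appearing are exactly $0, 1, 2$. This yields the two-sided bounds $2k-3\leq t\leq 2k+1$ and $2k-2\leq h\leq 2k$.

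Finally, for the dependence statement, the theorem asserts that the $\sim_{\sS}$-class of $R$ determines the vertex $V$ uniquely, and $V$ in turn determines the values of $2k-t$ and $2k-h$ via its label. Hence $2k-t$ and $2k-h$ are functions of $R$ alone, completing the proof. The only ``work'' here is the enumeration of vertex labels, which is purely mechanical; the theorem has already done the real job of ruling out any residue not appearing in the figure and of pinning down $(t,h,k)$ modulo the vertex.
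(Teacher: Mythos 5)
Your proof is correct and follows the paper's approach exactly: the paper's own proof is simply ``Immediate from Figure~\ref{fig:u2-transitions},'' and your enumeration of the vertex labels (yielding $2k-t\in\{-1,0,1,2,3\}$ and $2k-h\in\{0,1,2\}$) together with the uniqueness of the vertex is precisely the intended reading of that figure.
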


\begin{proof}
  Immediate from Figure~\ref{fig:u2-transitions}.
\end{proof}

\section{Alternative normal forms}

With the exception of the left-most and right-most gates, the
Matsumoto-Amano normal form uses syllables of the form $HT$ and
$SHT$. It is of course possible to use different sets of syllables
instead. We briefly comment on a number of possible alternatives.

\subsection{$E$-$T$ normal form}

Consider the Clifford operator
\[ E = HS^3\omega^3 = \frac{1}{2}\zmatrix{rr}{-1+i & 1+i \\ -1+i & -1-i}.
\]
The operator $E$ serves as a convenient operator for switching between
the $X$-, $Y$-, and $Z$-bases, due to the following properties:
\[ E^3 = I,
\quad EXE\da = Y,
\quad EYE\da = Z,
\quad EZE\da = X.
\]
The operator $E$ is often convenient for calculations; for example,
every Clifford gate can be uniquely written as $E^aX^bS^c\omega^d$,
where $a\in\s{0,1,2}$, $b\in\s{0,1}$, $c\in\s{0,\ldots,4}$,
$d\in\s{0,\ldots,7}$.  On the Bloch sphere, it represents a rotation
by 120 degrees about the axis $(1,1,1)^T$:
\[ \hat E = \zmatrix{ccc}{0&0&1\\1&0&0\\0&1&0}.
\]

The operators $E$ and $E^2$ have properties analogous to $H$ and $SH$.
Specifically, if we let $\sH = \s{I,E,E^2}$ and $\sHp = \s{E,E^2}$,
then the properties of Lemma~\ref{lem-SH} are satisfied. The proofs of
Theorem~\ref{thm-ex} and Corollary~\ref{cor-efficient} only
depend on these properties, and the uniqueness proof
(Theorem~\ref{thm-ma}) also goes through without significant
changes. We therefore have:

\begin{proposition}[$E$-$T$ normal form]
  Every single-qubit Clifford+$T$ operator can be uniquely written in
  the form
  \begin{equation}\label{eqn-et}
    (T\mid\emptyseq)\,(ET\mid E^2T)^*\,{\cC}.
  \end{equation}
  Moreover, this normal form has minimal $T$-count, and there exists a
  linear-time algorithm for symbolically reducing any sequence of
  Clifford+$T$ operators to this normal form.
\end{proposition}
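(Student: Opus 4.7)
The plan is to carry out the argument in exact parallel with the Matsumoto--Amano case, using the hint that $E$ and $E^2$ should play the roles of $H$ and $SH$, respectively. First I would set $\sH = \s{I,E,E^2}$, $\sHp = \s{E,E^2}$, and verify the five identities of Lemma~\ref{lem-SH}. The identities $\sC = \sH\sS$ and $\sCp = \sHp\sS$ follow because $E^3=I$ and $E\notin\sS$ (e.g.\ because $E = HS^3\omega^3$ and $H\notin\sS$), so $\sS$, $E\sS$, $E^2\sS$ are three pairwise distinct left cosets of the index-$3$ subgroup~$\sS$. For $\sS\sHp\seq\sHp\sS$, the point is that $\sS E$ and $\sS E^2$ are each contained in $\sC\sm\sS = \sCp = \sHp\sS$, since $sE\in\sS$ would force $E\in\sS$, and similarly for $E^2$. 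The identities $\sS T = T\sS$ and $T\sS T = \sS$ are already established in the proof of Lemma~\ref{lem-SH} and are unchanged here, as they concern only $\sS$ and~$T$.

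With these five properties in hand, the proof of Theorem~\ref{thm-ex} goes through verbatim, producing a decomposition in $(T\mid\emptyseq)(ET\mid E^2T)^*\cC$ that does not increase the $T$-count of the input. This simultaneously yields existence and $T$-optimality, by the same reasoning used in the $H/SH$ setting. Similarly, Corollary~\ref{cor-efficient} depends only on the abstract fact that a right multiplication by $T$ or a Clifford operator can be absorbed into the rightmost few gates of an existing normal form; since the argument uses only the identities above (plus bookkeeping on a constant number of rightmost gates), it gives the linear-time algorithm for the $E$-$T$ normal form as well.

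Uniqueness is the step that requires a genuine, though easy, calculation. I would reprove the analog of Lemma~\ref{lem-ma} using the Bloch sphere. The Bloch sphere representation $\hat E$ is given above, and a direct computation of $\hat T$, $\hat E\hat T$, and $\hat E^2\hat T$ shows that their $k$-parities (up to the right action of $\hat\sC$) are respectively
\[
\zmatrix{ccc}{1&1&0 \\ 1&1&0 \\ 0&0&0},\quad
\zmatrix{ccc}{0&0&0 \\ 1&1&0 \\ 1&1&0},\quad
\zmatrix{ccc}{1&1&0 \\ 0&0&0 \\ 1&1&0},
\]
i.e.\ exactly the three non-Clifford parity classes appearing in Figure~\ref{fig-so3-action}. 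Moreover, since $\hat E$ and $\hat E^2$ are signed permutation matrices, left multiplication by $\hat E$ or $\hat E^2$ merely permutes rows (modulo signs) and therefore permutes these three parity classes, so the same three-vertex cycle diagram governs the action of the $E$-$T$ syllables on $k$-parities as in the $H/SH$ case. The induction of Lemma~\ref{lem-ma} then produces, for each $E$-$T$ normal form $M$, a determined leftmost syllable ($T$, $ET$, or $E^2T$) as a function of the $k$-parity class of $\hat M$, and the proof of Theorem~\ref{thm-ma} goes through unchanged.

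I expect the only mild obstacle to be choosing the right normalization in the parity calculations so that $\hat T$, $\hat E\hat T$, $\hat E^2\hat T$ land in the three classes above in the expected cyclic order. This is a finite check, and once done, every remaining step is an immediate transcription of earlier arguments. The minimal-$T$-count statement then follows from existence and uniqueness together, exactly as in the Matsumoto--Amano case.
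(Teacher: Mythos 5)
Your proposal is correct and follows essentially the same route as the paper, whose own justification is precisely that $\sH=\s{I,E,E^2}$ and $\sHp=\s{E,E^2}$ satisfy the identities of Lemma~\ref{lem-SH}, so that the proofs of Theorem~\ref{thm-ex}, Corollary~\ref{cor-efficient}, and Theorem~\ref{thm-ma} carry over unchanged. Your explicit check that $\hat T$, $\hat E\hat T$, and $\hat E^2\hat T$ land in the three distinct parity classes of Figure~\ref{fig-so3-action} (with the $T$-arrows out of those classes unaffected) supplies exactly the finite verification the paper leaves implicit.
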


\subsection{$T_x$-$T_y$-$T_z$ normal form}

It is plain to see that every syllable of the $E$-$T$ normal form
(except perhaps the first or last one) consists of a 45 degree
$z$-rotation, followed by a basis change that rotates either the $x$-
or $y$-axis into the $z$-position. Abstracting away from these basis
changes, the entire normal form can therefore be regarded as a
sequence of 45-degree rotations about the $x$-, $y$-, and $z$-axes.
More precisely, let us define variants of the $T$-gate that rotate
about the three different axes:
\[ \begin{array}{l}
  T_x = ETE^2, \\
  T_y = E^2TE, \\
  T_z = T.
\end{array}
\]
Using the commutativities $ET_x = T_yE$, $ET_y = T_zE$, and $ET_z =
T_xE$, it is then clear that every expression of the form
(\ref{eqn-et}) can be uniquely rewritten as a sequence of $T_x$,
$T_y$, and $T_z$ rotations, with no repeated symbol, followed by a
Clifford operator. This can be easily proved by induction, but is best
seen in an example:
\[ \begin{array}{rcl}
  TETETE^2TEC
  &=& T_zET_zET_zE^2T_zEC \\
  &\rightarrow& T_zT_xE^2T_zE^2T_zEC \\
  &\rightarrow& T_zT_xT_yE^4T_zEC \\
  &\rightarrow& T_zT_xT_yET_zEC \\
  &\rightarrow& T_zT_xT_yT_xE^2C \\
  &\rightarrow& T_zT_xT_yT_xC'. \\
\end{array}
\]
We have:
\begin{proposition}[$T_x$-$T_y$-$T_z$ normal form]
  Every single-qubit Clifford+$T$ operator can be uniquely written in
  the form
  \[  T_{r_1}T_{r_2}\ldots T_{r_n} C,
  \]
  where $n\geq 0$, $r_1,\ldots,r_n\in\s{x,y,z}$, and $r_i\neq r_{i+1}$
  for all $i\leq n-1$. We define the $T$-count of such an expression
  to be $n$; then this normal form has minimal $T$-count. Moreover,
  there exists a linear-time algorithm for symbolically reducing any
  sequence of Clifford+$T$ operators to this normal form.
\end{proposition}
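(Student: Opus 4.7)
The plan is to derive both existence and uniqueness from the corresponding properties of the $E$-$T$ normal form, using that $\hat E$ cyclically permutes the coordinate axes. Concretely, since $\hat E$ is a rotation by $120^\circ$ about $(1,1,1)$, one verifies directly that $E T_r E\da = T_{\sigma(r)}$, where $\sigma$ is the cyclic permutation $x\mapsto y\mapsto z\mapsto x$. Equivalently, $T_r = E^{f(r)} T E^{-f(r)}$ for $f(z)=0$, $f(x)=1$, $f(y)=2$, and
\[ E^a T \,=\, T_{\sigma^a(z)}\, E^a \qquad\text{for all } a\in\Z.
\]
This commutation rule is the engine of the normal form.

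For existence and the $T$-count claim, I would start from the $E$-$T$ normal form of Proposition 7.1, written as $E^{c_0}\,T\,E^{c_1}\,T\,\cdots\,E^{c_{n-1}}\,T\,C$, where $c_0\in\s{0,1,2}$ (accommodating the optional leading $T$) and $c_i\in\s{1,2}$ for $i\geq 1$ (coming from the syllables $ET$ and $E^2T$). Repeatedly applying $E^a T = T_{\sigma^a(z)} E^a$ from left to right pushes all $E$-powers rightward, replacing the $i$-th $T$ by $T_{r_i}$ with $r_i = \sigma^{s_i}(z)$ and $s_i = c_0+c_1+\cdots+c_i\pmod 3$. The residual $E$-power is absorbed into the final Clifford. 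Because $c_i\in\s{1,2}$ for $i\geq 1$, we have $s_i\not\equiv s_{i-1}\pmod 3$, hence $r_i\neq r_{i-1}$. The number of $T$ symbols is preserved, so the $T$-optimality of the $E$-$T$ normal form transfers directly to the new normal form.

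For uniqueness, I would exhibit an explicit two-sided inverse. Given any $T_{r_0}T_{r_1}\cdots T_{r_{n-1}}C'$ with $r_i\neq r_{i-1}$, set $c_0=f(r_0)$, $c_i = f(r_i)-f(r_{i-1}) \pmod 3$ for $i\geq 1$, and $C = E^{-f(r_{n-1})}C'$. The distinct-adjacent condition forces $c_i\in\s{1,2}$ for $i\geq 1$, so $E^{c_0}\,T\,E^{c_1}\,T\,\cdots\,E^{c_{n-1}}\,T\,C$ is an $E$-$T$ normal form, and one checks this construction is inverse to the existence procedure. Hence any two $T_x$-$T_y$-$T_z$ normal forms for the same operator would yield two $E$-$T$ normal forms for that operator, contradicting uniqueness of the latter. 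The linear-time algorithm is immediate: compute the $E$-$T$ normal form in linear time (by Corollary 2.3 adapted to the $E$-$T$ form) and apply the constant-work-per-syllable substitution $c_i\mapsto r_i$ described above. The only point requiring care — and the main (minor) obstacle — is aligning the two nontriviality conditions: the fact that the $E$-factors between consecutive $T$'s in the $E$-$T$ normal form are never trivial is exactly what corresponds to the condition $r_i\neq r_{i+1}$ in the new form, and this matching is what makes the bijection work.
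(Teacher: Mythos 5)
Your proposal is correct and takes essentially the same route as the paper, which likewise derives the $T_x$-$T_y$-$T_z$ form from the $E$-$T$ normal form by pushing powers of $E$ to the right via the commutativities $ET_x=T_yE$, $ET_y=T_zE$, $ET_z=T_xE$, and observes that the nontriviality of the $E$-factors between consecutive $T$'s is exactly what yields $r_i\neq r_{i+1}$. The only difference is one of explicitness: the paper merely asserts that the rewriting ``can be easily proved by induction'' and illustrates it with an example, whereas you spell out the bijection (via $s_i=c_0+\cdots+c_i \bmod 3$ and its inverse) in full.
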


The $T_x$-$T_y$-$T_z$ normal form was first considered by Gosset et
al.~{\cite[Section 4]{Gosset-etal}}. It is, in a sense, the most
``canonical'' one of the normal forms considered here; it also
explains why $T$-count is an appropriate measure of the size of a
Clifford+$T$ operator. In a physical quantum computer with error
correction, there is in general no reason to expect the $T_z$ gate to
be more privileged than the $T_x$ or $T_y$ gates; one may imagine a
quantum computer providing all three $T$-gates as primitive logical
operations.

\subsection{Bocharov-Svore normal forms}

Bocharov and Svore {\cite[Prop.1]{BS}} consider the following normal
form for single-qubit Clifford+$T$ circuits:
\begin{equation}\label{eqn-bs1}
  (H\mid\emptyseq)\,(TH\mid SHTH)^*\,{\cC}.
\end{equation}
This normal form is not unique; for example, $H.H$ and $I$ are two
different normal forms denoting the same operator, as are $SHTH.Z$ and
$H.SHTH$. (Here we have used a dot to delimit syllables; this is for
readability only). Recall that two regular expressions are {\em
  equivalent} if they define the same set of strings. Using laws of
regular expressions, we can equivalently rewrite (\ref{eqn-bs1}) as
\begin{equation}\label{eqn-bs2}
  ((\emptyseq\mid T\mid SHT) (HT\mid HSHT)^*\,H{\cC}) ~\mid~ \cC.
\end{equation}
Since $H\cC$ is just a redundant way to write a Clifford operator, we
can simplify it to $\cC$; moreover, in this case, $\emptyseq\cC$ and
$\cC$ are the same, so (\ref{eqn-bs2}) simplifies to
\begin{equation}\label{eqn-bs3}
    (\emptyseq\mid T\mid SHT) (HT\mid HSHT)^*\,{\cC}.
\end{equation}
Moreover, since $SHT=HSHT.X$, any expression starting with $SHT$ can be
rewritten as one starting with $HSHT$, so the $SHT$ syllable is
redundant and we can eliminate it:
\begin{equation}\label{eqn-bs4}
    (\emptyseq\mid T) (HT\mid HSHT)^*\,{\cC}.
\end{equation}
Let us say that an operator is in {\em Bocharov-Svore normal form} if
it is written in the form (\ref{eqn-bs4}). This version of the
Bocharov-Svore normal form is indeed unique; note that it is almost
the same as the Matsumoto-Amano normal form, except that the syllable
$SHT$ has been replaced by $HSHT$. Since the set $\sH=\s{I,H,HSH}$
satisfies Lemma~\ref{lem-SH}, existence, uniqueness, $T$-optimality,
and efficiency are proved in the same way as for the Matsumoto-Amano
and $E$-$T$ normal forms.

Bocharov and Svore {\cite[Prop.2]{BS}} also consider a second normal
form, which has Clifford operators on both sides, but the first four
interior syllables restricted to $TH$:
\begin{equation}
  \cC\,(\emptyseq\mid TH\mid (TH)^2 \mid (TH)^3 \mid (TH)^4(TH\mid SHTH)^*)\,\cC
\end{equation}
However, this normal form is not at all unique; for instance,
$Z.TH$ and $TH.X$ denote the same operator, as do $YS.TH.TH$ and
$TH.TH.X\omega$.

\section{Conclusion}

In the five years since Matsumoto and Amano published their normal
form for single-qubit Clifford+$T$ circuits, exact and approximate
synthesis of quantum circuits has only grown in importance. The
Solovay-Kitaev algorithm has been replaced by a new generation of
efficient number-theoretic approximate synthesis algorithms that
achieve circuit sizes that are linear in $\log(1/\epsilon)$
{\cite{KMM-approx,Selinger-newsynth,KMM-practical,KBS13}}. Progress
has also been made on exact synthesis, and there are now nice
algebraic characterizations of the Clifford+$T$ group, both on one
qubit {\cite{Kliuchnikov-etal}} and multiple qubits
{\cite{Giles-Selinger}}. While there are still many open questions in
the multi-qubit case, it appears that single-qubit Clifford+$T$
circuits are by now exceptionally well-understood. The Matsumoto-Amano
normal form is an important part of this understanding. We hope that
with this paper, we have fleshed out the basic properties of this
remarkable normal form, and contributed to making it more widely
known.

\section{Acknowledgements}

Thanks to Xiaoning Bian for reporting typos.

\bibliographystyle{abbrv}
\bibliography{ma-remarks}

\begin{thebibliography}{10}

\bibitem{BS}
A.~Bocharov and K.~M. Svore.
\newblock Resource-optimal single-qubit quantum circuits.
\newblock {\em Physical Review Letters}, 109:190501 (5 pages), 2012.
\newblock Also available from \arxiv{1206.3223}.

\bibitem{Giles-Selinger}
B.~Giles and P.~Selinger.
\newblock Exact synthesis of multiqubit {Clifford+$T$} circuits.
\newblock {\em Physical Review A}, 87:032332, 2013.
\newblock Also available from \arxiv{1212.0506}.

\bibitem{Gosset-etal}
D.~Gosset, V.~Kliuchnikov, M.~Mosca, and V.~Russo.
\newblock An algorithm for the {$T$}-count.
\newblock \arxiv{1308.4134}, Aug. 2013.

\bibitem{regexp}
J.~E. Hopcroft, R.~Motwani, and J.~D. Ullman.
\newblock {\em Introduction to automata theory, languages, and computation}.
\newblock Pearson/Addison Wesley, 3rd edition, 2007.

\bibitem{KBS13}
V.~Kliuchnikov, A.~Bocharov, and K.~M. Svore.
\newblock Asymptotically optimal topological quantum compiling.
\newblock \arxiv{1310.4150}, Oct. 2013.

\bibitem{KMM-practical}
V.~Kliuchnikov, D.~Maslov, and M.~Mosca.
\newblock Practical approximation of single-qubit unitaries by single-qubit
  quantum {Clifford} and {$T$} circuits.
\newblock \arxiv{1212.6964}, Dec. 2012.

\bibitem{KMM-approx}
V.~Kliuchnikov, D.~Maslov, and M.~Mosca.
\newblock Asymptotically optimal approximation of single qubit unitaries by
  {Clifford} and {$T$} circuits using a constant number of ancillary qubits.
\newblock {\em Phys. Rev. Lett.}, 110:190502 (5 pages), 2013.
\newblock Also available from \arxiv{1212.0822v2}.

\bibitem{Kliuchnikov-etal}
V.~Kliuchnikov, D.~Maslov, and M.~Mosca.
\newblock Fast and efficient exact synthesis of single qubit unitaries
  generated by {Clifford} and {$T$} gates.
\newblock {\em Quantum Information and Computation}, 13(7--8):607--630, 2013.
\newblock Also available from \arxiv{1206.5236}.

\bibitem{MA08}
K.~Matsumoto and K.~Amano.
\newblock Representation of quantum circuits with {Clifford} and $\pi/8$ gates.
\newblock \arxiv{0806.3834}, June 2008.

\bibitem{Nielsen-Chuang}
M.~A. Nielsen and I.~L. Chuang.
\newblock {\em Quantum Computation and Quantum Information}.
\newblock Cambridge University Press, 2002.

\bibitem{Selinger-newsynth}
P.~Selinger.
\newblock Efficient {Clifford}+{$T$} approximation of single-qubit operators.
\newblock \arxiv{1212.6253}, Dec. 2012.

\end{thebibliography}

\end{document}